\documentclass[11pt,oneside]{amsart} 
\usepackage{amscd,amssymb,amsxtra}
\usepackage[mathscr]{eucal}
\usepackage{mathabx}
\usepackage{enumerate}
\usepackage{comment}
\usepackage{color}
\usepackage{hyperref}
\setlength{\textwidth}{6.5truein} \setlength{\hoffset}{-.5truein} 
\setlength{\textheight}{8.9truein} \setlength{\voffset}{-.4truein} 
\setlength{\abovedisplayskip}{18pt plus4.5pt minus9pt}
\setlength{\belowdisplayskip}{\abovedisplayskip}
\setlength{\abovedisplayshortskip}{0pt plus4.5pt}
\setlength{\belowdisplayshortskip}{10.5pt plus4.5pt minus6pt}

\makeatletter
\let\@secnumfont\bfseries
\def\section{\@startsection{section}{1}%
  \z@{4\linespacing\@plus\linespacing}{\linespacing}%
  {\bfseries\centering}}
\def\introsection{\@startsection{section}{1}%
  \z@{3\linespacing\@plus\linespacing}{\linespacing}%
  {\bfseries\centering}}
 \def\subsection{\@startsection{subsection}{2}%
   \z@{1.25\linespacing\@plus.7\linespacing}{.5\linespacing}%
   {\normalfont\bfseries}}
\makeatother

\theoremstyle{definition}

\newtheorem*{example*}{Example}
\newtheorem*{problem*}{Problem}
\newtheorem*{exercise*}{Exercise}
\newtheorem*{question*}{Question}

\theoremstyle{remark}

\newtheorem{remark}[equation]{Remark}

\newtheorem*{note*}{Note}
\newtheorem*{notation*}{Notation}
\newtheorem*{remark*}{Remark}

\theoremstyle{plain}
\newtheorem{definition}[equation]{Definition}
\newtheorem{theorem}[equation]{Theorem}
\newtheorem{corollary}[equation]{Corollary}
\newtheorem{lemma}[equation]{Lemma}
\newtheorem{proposition}[equation]{Proposition}

\newtheorem{proposal}[equation]{Proposal}

\newtheorem*{definition*}{Definition}
\newtheorem*{theorem*}{Theorem}
\newtheorem*{corollary*}{Corollary}
\newtheorem*{lemma*}{Lemma}
\newtheorem*{proposition*}{Proposition}
\newtheorem*{conjecture*}{Conjecture}
\newtheorem*{claim*}{Claim}
\newtheorem*{proposal*}{Proposal}
\newtheorem*{conclusion*}{Conclusion}

\numberwithin{equation}{section}

\definecolor{refkey}{rgb}{0,.6,.4}

\renewcommand{\:}{\colon}

\DeclareMathOperator{\End}{End}

\DeclareMathOperator{\Hom}{Hom}
\DeclareMathOperator{\id}{id}

\DeclareMathOperator{\pt}{pt}

\newcommand{\RP}{{\mathbb R\mathbb P}}
\newcommand{\RR}{{\mathbb R}}
\newcommand{\TT}{\mathbb T}
\DeclareMathOperator{\Spin}{Spin}

\newcommand{\ZZ}{{\mathbb Z}}

\newcommand{\chiup}{\raise.5ex\hbox{$\chi$}}
\newcommand{\cir}{S^1}

\newcommand{\inv}{^{-1}}
\newcommand{\mstrut}{^{\vphantom{1*\prime y}}}

\DeclareMathOperator{\rank}{rank}

\newcommand{\temsquare}{\raise3.5pt\hbox{\boxed{ }}}

\newcommand{\zmod}[1]{\ZZ/#1\ZZ}

\newcommand{\zt}{\zmod2}

\usepackage[all,2cell,dvips]{xy}\renewcommand{\cir}{\ensuremath{S^1}}
\usepackage{color}

\DeclareMathOperator{\Cliff}{Cliff}
\DeclareMathOperator{\Det}{Det}
\DeclareMathOperator{\Euler}{Euler}
\DeclareMathOperator{\Image}{Image}
\DeclareMathOperator{\Pfaff}{Pfaff}
\DeclareMathOperator{\Pic}{Pic}
\DeclareMathOperator{\Pin}{Pin}
\DeclareMathOperator{\hol}{hol}
\DeclareMathOperator{\pfaff}{pfaff}
\newcommand{\Luniv}{L^{\textnormal{univ}}}
\newcommand{\Opinm}{\Omega^{\textnormal{Pin${}^-$}}_2}
\newcommand{\Picdz}[1]{\Pic^0_\nabla (#1)}
\newcommand{\Picd}[1]{\Pic_\nabla (#1)}
\newcommand{\Pinm}{\Pin^-}
\newcommand{\RZ}{\RR/\ZZ}
\newcommand{\Spinstr}{\mathscr{S}}
\newcommand{\Xa}{X_{a(\cb)}}
\newcommand{\Xw}{X_w}
\newcommand{\al}{\alpha \mstrut_\ell}
\newcommand{\am}{\alpha ^-}
\newcommand{\beq}{[\cb]}

\newcommand{\cH}{\widecheck{H}}
\newcommand{\cR}{\widecheck{R}}
\newcommand{\cb}{\check{\beta }}
\newcommand{\ce}{\check{\eta}}
\newcommand{\chh}{\widecheck{h}}
\newcommand{\cth}{\check{\theta }}
\newcommand{\gpd}{/\!/\,} 
\newcommand{\hM}{\widehat{M}}
\newcommand{\hS}{\widehat{\Sigma }}
\newcommand{\hp}{\hat\pi }
\newcommand{\hs}{\hat\sigma }
\newcommand{\pinm}{$\text{pin}^-$}
\newcommand{\pinp}{$\text{pin}^+$}
\newcommand{\sBSO}{\sB_{SO}}
\newcommand{\sB}{\mathcal{B}}
\newcommand{\sF}{\mathcal{F}}
\newcommand{\sH}{\mathcal{H}}

\newcommand{\sPm}{\mathscr{P}^-}
\newcommand{\sT}{\mathcal{T}}
\newcommand{\son}{\mathfrak{s}\mathfrak{o}_n}
\newcommand{\sstr}{\sB_{\textnormal{Spin}}}

\newcommand{\tOn}{\widetilde{O}_n}
\newcommand{\tOt}{\widetilde{O}_{10}}
\newcommand{\tR}[1]{\tau^R(#1)}

\newcommand{\tf}{\tilde\phi }
\newcommand{\tg}{\tilde{g}}
\newcommand{\thol}{\widetilde{\hol}}
\newcommand{\tko}{\widetilde{ko}\,}

\newtheorem{supposition}[equation]{Supposition}

\begin{document}

\abovedisplayskip18pt plus4.5pt minus9pt
\belowdisplayskip \abovedisplayskip
\abovedisplayshortskip0pt plus4.5pt
\belowdisplayshortskip10.5pt plus4.5pt minus6pt
\baselineskip=15 truept
\marginparwidth=55pt

\renewcommand{\labelenumi}{(\roman{enumi})}




 \title{Spin structures and superstrings} 
 \thanks{Report \#: UTTG-07-10}
 \author[J. Distler]{Jacques Distler} 
 \thanks{The work of J.D. is supported by the National Science Foundation
under grant PHY-0455649 and a grant from the US-Israel Binational Science
Foundation.}   
 \address{Theory Group, Department of Physics, and Texas Cosmology Center \\
University of Texas \\ 1 University Station C1600\\ Austin, TX 78712-0264}
 \email{distler@golem.ph.utexas.edu}

 \author[D. S. Freed]{Daniel S.~Freed}
 \thanks{The work of D.S.F. is supported by the National Science Foundation
under grant DMS-0603964}
 \address{Department of Mathematics \\ University of Texas \\ 1 University
Station C1200\\ Austin, TX 78712-0257}
 \email{dafr@math.utexas.edu}

 \author[G. W. Moore]{Gregory W.~Moore}
 \thanks{The work of G.W.M. is supported by the DOE under grant
DE-FG02-96ER40949.}
 \address{NHETC and Department of Physics and Astronomy \\
Rutgers University \\ Piscataway, NJ 08855--0849}
 \email{gmoore@physics.rutgers.edu}
 \thanks{We also thank the Aspen Center for Physics for providing a
stimulating environment for many discussions related to this paper.}
 \dedicatory{To Isadore Singer on the occasion of his 85th birthday}
 \date{July 26, 2010}
 \begin{abstract} 
 In superstring theory spin structures are present on both the 2-dimensional
worldsheet and 10-dimensional spacetime.  We present a new proposal for the
$B$-field in superstring theory and demonstrate its interaction with
worldsheet spin structures.  Our formulation generalizes to orientifolds,
where various twistings appear.  A special case of the orientifold worldsheet
$B$-field amplitude is a $KO$-theoretic construction of the $\zmod8$-valued
Kervaire invariant on \pinm\ surfaces.
 \end{abstract}
\maketitle


The Type II superstring in the NSR formulation is a theory of maps from a
closed surface~$\Sigma $---the worldsheet---to a 10-manifold~$X$---spacetime.
The spin structures of the title are present on both the worldsheet and the
spacetime.  Their roles have been explored in many works; a sampling of
references includes~\cite{GSO1, GSO2, SS1, SS2, R, SW, DH, AgMV, AgGMV, AW}.
In this paper we identify several new phenomena which are intimately related
to a new Dirac quantization condition for the $B$-field
(Proposal~\ref{thm:1}).  For example, in our approach the $B$-field amplitude
depends on the worldsheet spin structure.  In particular, the distinction
between Types~IIB and~IIA is encoded in the $B$-field and the worldsheet
$B$-field amplitude includes the usual signs in the sum over spin structures.
In another direction we answer the question: How does the spacetime spin
structure impact the worldsheet theory in the lagrangian formulation?  It
turns up in the definition of the partition function of worldsheet fermions,
i.e., in computing the pfaffian of the Dirac operator on~$\Sigma $.  For
\emph{orientifolds} of the Type~II superstring, including the Type~I
superstring, there are several new features.  For example, we define
precisely the twisted notions of spin structure needed on~$\Sigma $ and
on~$X$.  We also consider the worldsheet $B$-field amplitude and the
partition function of worldsheet fermions.  It turns out that each is
anomalous and that these anomalies cancel.  That anomaly cancellation is the
subject of a future paper~\cite{DFM2}; here we are content to motivate that
work and consider some special cases.
 
Evidently, these spin structure considerations are closely tied to the
$B$-field~$\cb$, with which we begin in~\S\ref{sec:1}.  Quite generally,
Dirac quantization of charges and fluxes is implemented by generalized
cohomology theories.  For the oriented bosonic string the $B$-field has a
flux quantized by~$H^3(X;\ZZ)$.  We locate the \emph{superstring} $B$-field
quantization condition in a \emph{generalized} cohomology theory~$R$ which is
a truncation of connective $KO$-theory.  Then the $B$-field is modeled in the
\emph{differential} cohomology group~$\cR\inv (X)$ using the general
development of differential cohomology in~\cite{HS}.  In~\S\ref{sec:2} we
take up the integral of $\phi ^*\cb$ on the worldsheet~$\Sigma $ for maps
$\phi \:\Sigma \to X$.  The presence of $KO$-theory suggests the dependence
on worldsheet spin structures.  We show how the standard $\zt$-valued
quadratic function on spin structures~\cite{A1} is embedded in the $B$-field
amplitude, leading to the distinction between Types~IIB and~IIA.  A
generalization of the Scherk-Schwarz construction~\cite{SS1, SS2} is
also part of our $B$-field amplitude.  Orbifolds (in the sense of string
theory) and orientifolds are introduced in~\S\ref{sec:3}.  To accommodate the
former we allow~$X$ to be an orbifold (in the sense of differential
geometry); the orientifold is encoded in a double cover $\pi \:X_w\to X$ of
orbifolds.  The $B$-field~$\cb$ is now quantized by the $R$-cohomology of the
Borel construction applied to~$X$, with local coefficients determined by the
double cover~$\pi $ (Proposal~\ref{thm:13}).  The integral of~$\phi ^*\cb$ is
taken up in~\S\ref{sec:4}.  We posit a spin structure on the orientation
double cover $\hp\:\hS\to\Sigma $ of the worldsheet.  In case this refines
and is refined to a \pinm\ structure the integral of~$\phi ^*\cb$ may be
easily defined.  For a certain $B$-field this yields a $KO$-theoretic
construction of the $\zmod8$-valued Kervaire invariant on \pinm\
surfaces~\cite{Bro}, \cite[\S3]{KT}.  For a general (non-\pinm)
spin structure on~$\hS$ the $B$-field amplitude is anomalous~\eqref{eq:59};
its definition is postponed to~\cite{DFM2}.  In \S\ref{sec:5} we prove a
formula for the pfaffian line of the Dirac operator in a related
one-dimensional supersymmetric quantum mechanical model, the one which
computes the index of the Dirac operator.  That formula is a categorified
index theorem in low dimensions.  We see explicitly how the spin structure on
spacetime enters.  This result is included here as motivation
for~\cite{DFM2}, where we take up the analogous problem on the
two-dimensional worldsheet.  The precise nature of the spin structure on
spacetime for orientifolds is the subject of~\S\ref{sec:6}.  It is a twisted
version of the usual notion of spin structure, where the twisting depends on
the orientifold double cover $\pi \:X_w\to X$ as well as the $B$-field~$\cb$.
 
The telegraphic pr\'ecis~\cite{DFM1} outlines many aspects of orientifold
theory.  This is the first of several papers which expatiate on this
r\'esum\'e.  These papers provide motivation, give precise definitions,
develop some background mathematics, state and prove the main theorems, and
give applications to physics.  The geometry of the $B$-field is further
developed in subsequent papers.  In~\cite{DFM2} we build a geometric model of
~$\cR\inv (X)$.  The geometric model is used in~\cite{DFM3} to twist
$K$-theory and its cousins, thus defining the home of the Ramond-Ramond field
on~$X$.  The $B$-field is a twisting of $K$-theory.  This relation to
twistings of $K$-theory is one of the main motivations for the choice of
Dirac quantization condition for the $B$-field.
 
The ideas here touch on many mathematical works of Isadore Singer: among
others his recent paper~\cite{HS} on quadratic forms and generalized
differential cohomology, his many contributions to index theory and the
geometry of Dirac operators, and even his use of frame bundles to express
geometric structures on manifolds~\cite{S}.  Beyond that his prescient
recognition 30~years ago of the role that theoretical high energy physics
would play in late $20^{\textnormal{th}}$~century and early
$21^{\textnormal{st}}$~century mathematics has had enormous influence on the
entire field.  
 
We thank Andrew Blumberg, Mike Hopkins, Isadore Singer, and Edward Witten for
helpful discussions.

  \section{$B$-fields and generalized differential cohomology}\label{sec:1}

In classical physics an abelian gauge field is determined by its field
strength~$F$, a closed differential form on spacetime~$X$.  The archetype is
the Maxwell electromagnetic field, a closed 2-form in 4~spacetime
dimensions.\footnote{The word `gauge' in `classical gauge theory' applies
when we identify~$\Omega ^2(X)_{\textnormal{exact}}\cong \Omega
^1(X)/\Omega^1(X)_{\textnormal{closed}}$.}  Abelian gauge theories include an
electric current~$j$, which in Maxwell theory is a closed 3-form with compact
support on spacelike hypersurfaces.  The de Rham cohomology class of~$F$ is
called the \emph{classical flux}\footnote{Our usage of `flux' is not entirely
standard.} and the de Rham cohomology class of~$j$ the \emph{classical
charge}.  (The latter is taken with compact supports in spatial directions.)
In quantum theories Dirac's quantization principle constrains these classical
fluxes and charges to full lattices inside the appropriate de Rham cohomology
groups.  For example, the quantum Maxwell electromagnetic flux is constrained
to the image of~$H^2(X;\ZZ)$ in~$H^2(X;\RR)\cong H^2_{dR}(X)$.  It is natural
to refine the flux to the abelian group~$H^2(X;\ZZ)$.  Indeed, in the quantum
theory the Maxwell electromagnetic field is modeled as a connection on a
principal circle bundle $P\to X$, and the flux is the topological equivalence
class of~$P$.  The electric charge is then refined to~$H^3(X;\ZZ)$ (with
appropriate supports), and there is a magnetic charge in the quantum theory
as well.  This leads to the notion that for any abelian gauge field, charges
and fluxes lie in abelian groups which are cohomology groups of spacetime.
It is a relatively recent discovery that \emph{generalized} cohomology groups
may occur.  Spacetime anomaly cancellation~\cite{GHM, MM} led to the
proposal, further elaborated in~\cite{W2}, that the Ramond-Ramond charges in
superstring theory are properly quantized by K-theory, at least in the large
distance and weak coupling limit.  Similarly, the fluxes are also quantized
by K-theory~\cite{FH, MW}.  In general, to quantize a classical abelian gauge
field one must choose a generalized cohomology group which reproduces the
appropriate de Rham cohomology vector space after tensoring over the reals.
The choice of cohomology theory is an input.  There are many physical
considerations which motivate the choice and can be used to justify it.
See~\cite[Part~3]{F1}, \cite{W3, OS, M} for leisurely expositions of these
ideas, including some examples.

In string theory, spacetime~$X$ is a smooth manifold whose dimension is~26
for the bosonic string and~10 for the superstring.\footnote{We use
`superstring' as a shorthand for `Type~II superstring' in a sigma model
formulation.}  In each case there is an abelian gauge field---the
``$B$-field''---whose field strength is a closed 3-form $H\in \Omega ^3(X)$.
Dirac's principle applies and we must locate the quantum flux in a cohomology
group.  The most natural choice applies a simple degree shift to the Maxwell
case.

        \begin{supposition}[]\label{thm:6}
 The flux of the oriented bosonic string $B$-field lies in~$H^3(X;\ZZ)$. 
        \end{supposition}

\noindent
 This supposition is certainly well-established~\cite{RW}.  In this
section we make a new proposal for the oriented superstring.

 \subsection{The cohomology theory~$R$}\label{subsec:1.1}

Let $ko$~denote connective $KO$-theory.  One construction~\cite{Se} starts
with the symmetric monoidal category of real vector spaces and applies a
de-looping machine to construct an infinite loop structure on its classifying
space.  More concretely, $ko$~is the real version of $K$-theory developed
in~\cite{A2} before inverting the Bott element; for any space~$M$ the abelian
groups~$ko^{q}(M)$ vanish for~$q>0$ and $ko^{-q}(M)\cong KO^{-q}(M)$
for~$q\ge0$.  Define the Postnikov truncation\footnote{We use the version of
Postnikov truncation for connective $E^{\infty}$-ring spectra~\cite{B}.  The
notation `$R$' for a multiplicative spectrum is generic, ergo uninformative,
but it would be cumbersome to use `$ko\langle 0\cdots4 \rangle$' instead.}
  \begin{equation}\label{eq:1}
     R:=ko\langle 0\cdots 4 \rangle .
  \end{equation}
Then $R$~is a generalized multiplicative cohomology theory, more precisely an
$E^{\infty}$-ring spectrum.  Its nonzero homotopy groups are
  \begin{equation}\label{eq:2}
     \{\pi _0,\pi _1,\pi _2,\pi _3,\pi _4\}(R)\cong \{\ZZ,\zt,\zt,0,\ZZ\},
  \end{equation}
a truncated Bott song.  These are also the nonzero $R$-cohomology groups of a
point and they occur in nonpositive degrees, as $R^{-q}(\pt)=\pi _{q}(R)$.
If we represent the theory as a (loop) spectrum~$\{R_{p}\}_{p\in \ZZ}$, so
that for any space~$M$ and~$q\ge0$ we compute $R^{-q}(M)=[M,R_{-q}]$ as the
abelian group of homotopy classes of maps into the space~$R_{-q}$, then
\eqref{eq:2}~are the homotopy groups of the space~$R_0$.

Here is our new proposal for the $B$-field in superstring theory.  Let $X$~be
a smooth 10-dimensional manifold which plays the role of spacetime in the
superstring.

        \begin{proposal}[]\label{thm:1}
 The flux of the oriented superstring $B$-field~$\cb $ lies in~$R\inv (X)$.
        \end{proposal}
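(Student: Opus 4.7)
The plan is to treat Proposal~\ref{thm:1} as a physical postulate rather than a theorem in the strict sense and to defend it by three independent consistency checks; together these also motivate why the specific truncation~\eqref{eq:1} is the correct piece of~$ko$ to use.

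First I would verify the de~Rham content. The classical flux is the field strength $H\in \Omega ^3(X)$, so whichever group we propose as the home of the quantum flux must recover $H^3(X;\RR)$ after tensoring with~$\RR$. From~\eqref{eq:2} only $\pi _0(R)$ and $\pi _4(R)$ survive rationalization, so $R\otimes \QQ$ splits as a wedge of Eilenberg--MacLane spectra and $R\inv (X)\otimes \RR \cong H^{-1}(X;\RR)\oplus H^3(X;\RR)=H^3(X;\RR)$, matching the oriented bosonic answer of Supposition~\ref{thm:6}. This step is essentially formal, but it also pins down what job the extra torsion in $R\inv (X)$ has to do: the two copies of~$\zt$ in low homotopy must control the $\zt$-valued data on the worldsheet and on spacetime that superstring physics forces into the theory.

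Second, I would match those torsion strata to worldsheet phenomena. Since $R$ is a Postnikov piece of $ko$-theory, any sensible integration map $\int _\Sigma \colon \cR\inv (\Sigma )\to \RR/\ZZ$ will require a $ko$-orientation of the worldsheet, i.e.\ a spin structure on~$\Sigma $; this is already strong evidence that $B$-field amplitudes must depend on worldsheet spin structures, as anticipated in the introduction. The edge homomorphisms of the Atiyah--Hirzebruch spectral sequence should then carry the fundamental class of a spin surface to Atiyah's $\zt$-valued quadratic refinement~\cite{A1}, while the other $\zt$-stratum should encode the Type~IIA/IIB distinction and the Scherk--Schwarz phases. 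Producing explicit cocycles in the differential refinement $\cR\inv $ that exhibit these $\zt$-pieces as honest spin and quadratic data is the real technical work, and is what~\S\ref{sec:2} undertakes.

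Third, I would check compatibility with the twisting of $K$-theory by the $B$-field, since the Ramond--Ramond charge is to live in a $B$-twisted variant of $K$-theory. Producing a multiplicative map from (the differential refinement of)~$R$ into the spectrum of $K$-theory twistings, and verifying that the induced torsion twists match those that physics demands for orientifolds, is what really singles out~$R$ among candidate truncations of~$ko$. The main obstacle, and the one I expect to be hardest, is precisely this minimality/uniqueness question: showing that no smaller truncation accommodates all three checks simultaneously and that no larger one is forced on us requires both the geometric model of~$\cR\inv (X)$ built in the companion paper~\cite{DFM2} and the twisting construction of~\cite{DFM3}.
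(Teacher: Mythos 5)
Your reading is right: the paper treats this as a Dirac-quantization postulate, not a theorem to be proved, and defends it by exactly the consistency checks you list — the rationalization check (the paper's ``first check'' that $R_{-1}$ has rational type $K(\RR,3)$, reproducing the classical $H\in\Omega^3(X)$), the matching of the low-degree $\zt$'s to worldsheet spin-structure phenomena in \S2 (Atiyah's quadratic refinement, the IIA/IIB sign, Scherk--Schwarz), and compatibility with $K$-theory twistings for the Ramond--Ramond field, which the introduction cites as the main motivation and defers to \cite{DFM2,DFM3}. Your account is essentially the same as the paper's.
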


\noindent
 As a first check we note that the nonzero homotopy groups of the
space~$R_{-1}$ are
  \begin{equation}\label{eq:3}
      \{\pi _0,\pi _1,\pi _2,\pi _3\}(R_{-1})\cong \{\zt,\zt,0,\ZZ\} ,
  \end{equation}
so after tensoring with the reals we obtain the Eilenberg-MacLane
space~$K(\RR,3)$ which computes real cohomology in degree~3.  This is as it
should be: the classical fluxes of the classical field~$H$ lie in the
degree~3 de Rham cohomology of the manifold~$X$.  We explore some physical
consequences of the nonzero torsion homotopy groups in~\S\ref{sec:2}.  

We record the exact sequence of abelian groups
  \begin{equation}\label{eq:4}
     0 \longrightarrow H^3(M;\ZZ)\longrightarrow R^{-1}(M)\xrightarrow{(t,a)}
     H^0(M;\zt)\times H^1(M;\zt)\longrightarrow 0 
  \end{equation}
which follows from the Postnikov tower (see~\eqref{eq:3}) and holds for any
space~$M$.  There is \emph{not} a corresponding exact sequence of cohomology
theories; the $k$-invariant between the bottom two homotopy groups is
nonzero.  The quotient group in~\eqref{eq:4} is more properly regarded as the
group of equivalence classes of $\zt$-graded real line bundles (equivalently:
$\zt$-graded double covers) over~$M$.  The exact sequence~\eqref{eq:4}
immediately implies
  \begin{equation}\label{eq:5}
     R\inv (\pt)\cong \zt,
  \end{equation}
and we can identify a generator with the nonzero element $\eta \in ko\inv
(\pt)\cong KO^{-1}(\pt)\cong \zt$.

There is a natural splitting of~\eqref{eq:4} as sets (not as abelian groups).
To construct it we interpret the quotient group as the group of $\zt$-graded
real line bundles and apply the following lemma.

        \begin{lemma}[]\label{thm:29}
 Let $V\to M$ be a real vector bundle over a space~$M$ and $[V]\in R^0(M)$
its equivalence class under the map $ko^0(M)\to R^0(M)$.  Then for $\eta
[V]\in R\inv (M)$ we have
  \begin{align}\label{eq:49}
     t\bigl(\eta [V] \bigr) &= \rank(V)\pmod2 \\ \label{eq:50} a\bigl(\eta
     [V] \bigr) &=w_1(V), 
  \end{align}
where $\rank(V)\:\pi _0M\to\ZZ$ is the rank.
        \end{lemma}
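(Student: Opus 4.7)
The plan is to prove both formulas by reducing via additivity and naturality to the tautological line bundle $\gamma\to\RP^\infty$, then further to the M\"obius bundle on $S^1$, where the computation can be done explicitly. My first step is to observe that both assignments $V\mapsto(t,a)(\eta[V])$ and $V\mapsto(\rank V\bmod 2,w_1(V))$ are natural under pull-back and additive under Whitney sum: $t,a$ are group homomorphisms, $[\cdot]$ and multiplication by $\eta$ are additive, $\rank$ is additive, and $w_1$ satisfies $w_1(V\oplus V')=w_1(V)+w_1(V')$ with no cross term. I will then invoke the real splitting principle---pulling back $V$ to the full flag bundle $F(V)\to M$, where $V$ splits as a sum of real line bundles and $F(V)\to M$ is injective on $H^*(-;\zt)$---to reduce to line bundles, and by naturality under classifying maps to the universal line bundle $\gamma\to\RP^\infty$.

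Next, I would use that pull-back along the skeletal inclusion $j\:\RP^1=S^1\hookrightarrow\RP^\infty$ is an isomorphism on both $H^0(-;\zt)$ and $H^1(-;\zt)$, reducing further to the M\"obius bundle $L:=j^*\gamma\to S^1$, with $\rank L=1$ and $w_1(L)$ the generator of $H^1(S^1;\zt)\cong\zt$. The key computation is then $R\inv(S^1)\cong R\inv(\pt)\oplus\tilde R\inv(S^1)\cong\zt\oplus\zt$, the second summand arising by suspension as $\tilde R\inv(S^1)\cong R^{-2}(\pt)=\pi _2(R)\cong\zt$, and the exact sequence~\eqref{eq:4} applied to $M=S^1$ (noting $H^3(S^1;\ZZ)=0$) shows that $(t,a)\:R\inv(S^1)\xrightarrow{\sim}\zt\times\zt$ is a bijection.

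Finally, I will identify the images of the two generators. The element $\eta\in R\inv(\pt)$ has $(t,a)(\eta)=(1,0)$ by the identification of the generator of $R\inv(\pt)\cong\zt$ together with $H^1(\pt)=0$, and bijectivity of $(t,a)$ forces the other generator $\eta^2\in\tilde R\inv(S^1)$ to map to $(0,1)$. Standard $KO$-theory identifies $[L]-1\in\tilde R^0(S^1)\cong R\inv(\pt)=\zt$ with the nonzero element (the M\"obius bundle generates $\tilde{KO}^0(S^1)\cong\zt$; this computation passes to $R$ since the relevant degrees of $ko$ are unaltered by Postnikov truncation), and the $E^\infty$-ring structure of $R$ inherited from $ko$ gives $\eta\cdot([L]-1)=\eta^2$. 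Thus $\eta[L]=\eta+\eta^2$ in $R\inv(S^1)$, and $(t,a)(\eta[L])=(1,1)=(\rank L\bmod 2,w_1(L))$. The main obstacle I anticipate is this last multiplicativity step---verifying $\eta\cdot([L]-1)=\eta^2\ne 0$---which I will handle via the standard $ko$-theory fact that $\eta^2\ne 0$ in $ko^{-2}(\pt)=\pi _2(ko)$.
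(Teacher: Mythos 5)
Your proof is correct and uses a genuinely different reduction to line bundles. The paper avoids the splitting principle entirely: it observes that $t$ is determined on the 0-skeleton, where $V$ is trivial of rank $\rank V$, giving~\eqref{eq:49} directly from $t(\eta)=1$; and that $a$ is determined on the 1-skeleton, where $V - \rank V$ agrees with $\Det V - 1$ in $\widetilde{ko}^0(M^1)$, so the determinant line bundle plays the role that the full flag bundle plays in your argument. Both approaches then converge on the same endgame, namely computing $\eta\,[H-1]$ on $\RP^1\cong S^1$ and identifying it with $\eta^2\neq 0$; your suspension isomorphism $\widetilde{R}\inv(S^1)\cong R^{-2}(\pt)$ is exactly the paper's identification of $(H-1)\otimes(H-1)$ with $\eta^2\in ko^{-2}(\pt)$. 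The one soft spot is the assertion that ``bijectivity of $(t,a)$ forces $\eta^2\mapsto(0,1)$'': bijectivity alone also permits $(1,1)$. You need the further observation that $t(\eta^2)=0$, which holds because $\eta^2$ lies in the reduced summand $\widetilde{R}\inv(S^1)$ while $t$ factors through restriction to a basepoint. This is implicit in your direct-sum setup but should be stated, and the paper's separate 0-skeleton treatment of the $t$-coordinate sidesteps the issue altogether.
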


        \begin{proof}
 The map~$t$ in~\eqref{eq:4} is determined on the 0-skeleton~$M^0$ of~$M$,
and $V$~is equivalent to~$\rank(V)$ in~$ko^0(M^0)$.  This
reduces~\eqref{eq:49} to the assertion~$t(\eta )=1$, which is essentially the
isomorphism~\eqref{eq:5}.  The map~$a$ in~\eqref{eq:4} is determined on the
1-skeleton, and as $a(\eta )=0$ we can replace~$V$ by its reduced determinant
line bundle~$\Det V-1$, which is equivalent to~$V-\rank V$ in the reduced
group~$\widetilde{ko}^0(M^1)$.  Hence it suffices to prove~\eqref{eq:50} for
the universal real line bundle $\Luniv\to\RP^{\infty}$.  Identify~$ko\inv
(\pt)\cong \widetilde{ko}^0(\RP^1)$ and represent~$\eta $ by the reduced
M\"obius line bundle $(H-1)\to\RP^1$.  Then $\eta [\Luniv]$~is represented by
the external tensor product~$(H-1)\otimes \Luniv\to \RP^1\times \RP^\infty $.
To compute the $a$-component in~\eqref{eq:4} we restrict to the 1-skeleton
$\RP^1\subset \RP^{\infty}$, over which $\Luniv$~is identified with~$H$.
Again since $a(\eta )=0$ we may replace $(H-1)\otimes (H-1)\to\RP^1\times
\RP^1$, and this represents $\eta ^2\in ko^{-2}(\pt)$, which is nonzero.
This proves $\eta [H-1]$~is the nonzero class in $R\inv (\RP^1/\RP^0)\cong
H^1(\RP^1/\RP^0;\zt)$.  Therefore $a\bigl(\eta [H-1] \bigr)$, hence also
$a\bigl(\eta [\Luniv] \bigr)$, is nonzero.
        \end{proof}

 \subsection{Generalized differential cohomology and superstring
$B$-fields}\label{subsec:1.2} 

Semi-classical models of abelian gauge fields, which appear as background
fields or as inputs to a functional integral, combine the local information
of the classical field strength with the integrality of the quantum flux.  As
mentioned earlier the model for the Maxwell field is a circle bundle with
connection: its curvature is the classical field strength and its Chern class
the quantum flux.  Notice that there are nontrivial connections for which
both of these vanish.  In other words, the combination of classical field
strength and quantum flux do \emph{not} determine the semi-classical gauge
field.  Equivalence classes of Maxwell fields, thus of circle connections, on
any smooth manifold~$M$ form a topological abelian group~$\Picd M$, a
differential-geometric analog of the Picard group in algebraic geometry.
Its group of path components is
  \begin{equation}\label{eq:6}
     \pi _0\Picd M\cong H^2(M;\ZZ) 
  \end{equation}
the group of equivalence classes of circle bundles.  The map $\Picd M\to \pi
_0\Picd M$ forgets the connection.  The torus~$H^1(M;\RZ)$ of equivalence
classes of flat circle connections acts freely on the identity
component~$\Picdz M$ by tensor product, and the quotient
  \begin{equation}\label{eq:7}
     \Picdz M\to \Omega ^2_{\textnormal{exact}}(M) 
  \end{equation}
is the vector space of exact 2-forms.  Other components of~$\Picd M$ are also
total spaces of principal $H^1(M;\RZ)$-bundles; the bases are affine
translates of~$\Omega ^2_{\textnormal{exact}}(M)$ in the topological vector
space of closed 2-forms, affine spaces of closed forms with a fixed de Rham
cohomology class in the lattice $\Image\bigl(H^2(M;\ZZ)\to H^2(M;\RR)
\bigr)$.
 
Cheeger-Simons~\cite{CS} introduced topological abelian groups~$\cH^q(M)$ for
all integers~$q$ which generalize $\cH^2(M)\cong \Picd M$.  The
group~$\cH^1(M)$ is the group of smooth maps $M\to\TT$ into the circle group.
The group~$\cH^3(M)$ may be modeled as equivalence classes of $\TT$-gerbes
with connection or bundle gerbes~\cite{Br, Hi, Mu}.  The
original definition of~$\cH^q(M)$ is in terms of the integral over smooth
singular $(q-1)$-cycles, generalizing the holonomy of a $\TT$-connection
around a loop.  There is an alternative approach using sheaves, modeled after
a construction of Deligne in algebraic geometry.  Hopkins-Singer~\cite{HS}
provide two important supplements.  First, they define differential
cohomology groups~$\chh^{\bullet }(M)$ for any cohomology theory~$h$.
Second, they define spaces\footnote{In fact, they define simplicial sets.  We
use the moniker `points' for its 0-simplices.}~$\chh_{p }(M)$ such that $\pi
_0\chh_p(M)\cong \chh^p(M)$.  Thus points of~$\chh_{p}(M)$ may be considered
as geometric objects whose equivalence class lies in~$\chh^p(M)$, just as a
circle bundle with connection has an equivalence class in~$\Picd M$.  For the
specific cohomology theory~$R$ in~\eqref{eq:1} fix a singular cocycle $\iota
\in C^3(R_{-1};\RR)$ whose cohomology class is a normalized generator
of~$H^3(R_{-1};\RR)$.  Then a point of degree~$-1$ is a triple $(c,h,\omega
)$, where
  \begin{equation}\label{eq:8}
     \begin{aligned} c&\:M\longrightarrow R_{-1} \\ h &\in C^2(M;\RR) \\
     \omega &\in \Omega  ^3(M)\end{aligned} 
  \end{equation}
and $h$~satisfies $\delta h=\omega -c^*\iota $.  (It follows that $d\omega=0
$.)  We give $\cR^p(M)$ the structure of a topological abelian group for which
  \begin{equation}\label{eq:26}
     \pi _0\cR^p(M)\cong R^p(M) 
  \end{equation}
and each component is a principal $R^{p-1}(M;\RZ)$-bundle over an affine
space of closed differential forms.

The preceding discussion leads to corollaries of Supposition~\ref{thm:6} and
Proposal~\ref{thm:1}:
  \begin{align}\label{eq:60}
     &\textnormal{\it The oriented bosonic string $B$-field $\cb$ is a point in
     $\cH_{3}(X)$.} \\ 
     &\textnormal{\it The oriented superstring $B$-field $\cb$ is a point in
     $\cR_{-1}(X)$.} \label{eq:9}
  \end{align} 
In~\cite{DFM2} we give a concrete differential-geometric model of the
superstring $B$-field, whereas the model in terms of the spaces~$\cR_{p}(X)$
is more homotopy-theoretic.  In any case for the purposes of this paper we
only need the equivalence class~$\beq\in \cR^{-1}(X)$ of~$\cb $.  We remark
that $\cb$~determines $\beta\in R_{-1}$ whose equivalence class is~$[\beta
]\in R\inv (X)$; see~\eqref{eq:23} below.  Then using~\eqref{eq:4} we define
  \begin{equation}\label{eq:19}
     \bigl(t(\cb),a(\cb) \bigr)\in H^0(X;\zt)\times H^1(X;\zt). 
  \end{equation}
The physical significance of~\eqref{eq:19} is explained in subsequent
sections. 
 
We record the following exact sequences, which are specializations to the
case at hand of general facts about differential cohomology and hold for any
smooth manifold~$M$:
  \begin{align}     \label{eq:10}  
     0 \longrightarrow R^{-(q+1)}(M;\RZ)\longrightarrow
      \cR^{-q}(M)\longrightarrow \Omega ^{4-q}_{\ZZ}(M)\longrightarrow 0 \\
      \label{eq:23}
      0\longrightarrow \Omega ^{3-q}(M)/\Omega^{3-q}_\ZZ (M)\longrightarrow
      \cR^{-q}(M)\longrightarrow R^{-q}(M)\longrightarrow 0
  \end{align}
Here $q=1,2,3$ and $\Omega ^{4-q}_\ZZ(M)$~ denotes the space of closed forms
with integral periods.  In particular, it follows from these sequences
and~\eqref{eq:5} that
  \begin{equation}\label{eq:11}
     R^{-2}(\pt;\RZ)\cong \cR^{-1}(\pt) \cong R^{-1}(\pt)\cong \zt. 
  \end{equation}
The nonzero element~$\ce$ of~\eqref{eq:11} pulls back to any~$M$ and is a
special $B$-field in oriented superstring theory.  It may be identified with
the generator of $ ko^{-2} (\pt;\RZ)\cong KO^{-2} (\pt;\RZ)\cong \zt $.  Of
course, $\ce$~maps to~$\eta $ under the Bockstein homomorphism
$R^{-2}(\pt;\RZ)\to R^{-1}(\pt;\ZZ)$.

Any real line bundle~$L\to M$ determines
  \begin{equation}\label{eq:52}
     \ce [L]\in R^{-2}(M;\RZ)\longrightarrow \cR^{-1}(M)
  \end{equation}
with $t\bigl(\ce[L] \bigr)=1$ and $a\bigl(\ce [L] \bigr)=w_1(L)$; see
Lemma~\ref{thm:29}.  

        \begin{remark}[]\label{thm:38}
 An oriented superstring spacetime~$X^{10}$ is endowed with a spin
structure~$\kappa $.  (The twisted notion of spin structure for superstring
orientifold spacetimes is the subject of~\S\ref{sec:6}.)  Now the
$B$-field~$\cb$ may be written (Lemma~\ref{thm:29}) as a sum of an
object~$\cb_0$ in~$\cH^3(X)$ and a $\zt$-graded double cover $K\to X$, the
latter with characteristic class $\bigl(t(\cb),a(\cb) \bigr)\in
H^0(X;\zt)\times H^1(X;\zt)$.  We can shuffle the data: Define \emph{two}
spin structures $\kappa _\ell =\kappa $, $\kappa _r=\kappa +K$ on spacetime
and consider the $B$-field to be~$\cb_0$.  The two spin structures then
correlate with the two spin structures~$\alpha _\ell ,\alpha _r$ on the
worldsheet; see Definition~\ref{thm:2} below.  This splitting into `left' and
`right' does \emph{not} generalize to orientifolds.
        \end{remark}

 \section{The $B$-field amplitude and worldsheet spin structures}\label{sec:2}

The spacetime for oriented bosonic string theory is a smooth 26-manifold~$X$,
and the $B$-field~$\cb $ has an equivalence class in~$\cH^3(X)$; see
Supposition~\ref{thm:6}.  The worldsheet in oriented bosonic string theory is
a closed 2-manifold~$\Sigma $ with orientation~$\mathfrak{o}$ and a smooth
map $\phi \:\Sigma \to X$.  (It represents the propagation of closed strings;
for open strings $\Sigma $~may have a boundary.)  One factor in the
exponentiated action of the worldsheet theory is
  \begin{equation}\label{eq:12}
     \exp\left( 2\pi i\int_{\Sigma }\phi ^*\cb \right);
  \end{equation}
it only depends on the equivalence class ~$\beq\in \cH^3(X)$ and is defined
using the pushforward in ordinary differential cohomology: $\phi ^*\beq\in
\cH^3(\Sigma )$ and the orientation~$\mathfrak{o}$ on~$\Sigma $ determines a
pushforward map~\cite[\S3.5]{HS}
  \begin{equation}\label{eq:13}
     \int_{(\Sigma ,\mathfrak{o})}\:\cH^3(\Sigma )\longrightarrow
     \cH^1(\pt)\cong \RZ.  
  \end{equation}
In this section we define the analog for the superstring and explore some
consequences.

 \subsection{Spin structures on superstring worldsheets}\label{subsec:2.1}

As a preliminary we quickly review spin structures.  Recall that the
intrinsic geometry of a smooth $n$-manifold~$M$ is encoded in its principal
$GL_n\RR$-bundle of frames $\sB(M)\to M$.  A point of~$\sB(M)$ is a linear
isomorphism $\RR^n\to T_mM$ for some~$m\in M$.  Choose a Riemannian metric
on~$M$, equivalently, a reduction to an $O_n$-bundle of frames $\sB_O(M)\to
M$.  The spin group
  \begin{equation}\label{eq:14}
     \rho \:\Spin_n\longrightarrow O_n 
  \end{equation}
is the double cover of the index two subgroup $SO_n\subset O_n$.  A
\emph{spin structure} on~$M$ is a principal $\Spin_n$-bundle $\sstr\to M$
together with an isomorphism of the associated $O_n$-bundle with~$\sB_O(M)$.
It induces an orientation on~$M$ via the cover $\Spin_n\to SO_n$.  The space
of Riemannian metrics is contractible, so a spin structure is a topological
choice and can alternatively be described in terms of a double cover of an
index two subgroup of~$GL_n\RR$.  An isomorphism of spin structures is a map
$\sstr\to \sstr'$ such that the induced maps on $O_n$-bundles commutes with
the isomorphisms to~$\sB_O(M)$.  The \emph{opposite spin structure} to
$\sstr\to M$ is the complement of~$\sstr$ in the principal $\Pinm_n$-bundle
associated to the inclusion $\Spin_n\hookrightarrow \Pinm_n$;
see~\cite[Lemma~1.9]{KT} for more elaboration.\footnote{Recall that
$\Pin_n^{\pm}$~sits in the Clifford algebra~$\Cliff_n^{\pm}$ whose generators
satisfy~$\gamma ^2=\pm1$.  Either sign can be used to construct the opposite
spin structure.}  If $M$~admits spin structures, then the collection of spin
structures forms a groupoid whose set of equivalence classes~$\Spinstr(M)$ is
a torsor for $H^0(M;\zt)\times H^1(M;\zt)$; the action of a function $\delta
\:\pi _0M\to\zt$ in~$H^0(M;\zt)$ sends a spin structure to its opposite on
components where~$\delta =1$ is the nonzero element.  The automorphism group
of any spin structure is isomorphic to~$H^0(M;\zt)$; a function $\delta \:\pi
_0M\to\zt$ acts by the central element of~$\Spin_n$ on components
where~$\delta =1$.  The manifold~$M$ admits spin structures if and only if
the Stiefel-Whitney classes~$w_1(M),w_2(M)$ vanish.

A superstring worldsheet~ $(\Sigma, \mathfrak{o}) $ is oriented and is
equipped with a pair of spin structures\footnote{`$\ell $' and~`r' stand for
`left' and `right'.}~$\alpha \mstrut_\ell,\alpha _r$ which induce opposite
orientations at each point.  Our convention is that the left spin
structure~$\alpha _\ell $ induces the chosen orientation~$\mathfrak{o}$.
Observe that a spin structure is \emph{local} and can be considered as a
field in the sense of physics.  It is a discrete field, in fact a finite
field on a compact manifold: there are only finitely many spin structures up
to isomorphism.  As with gauge fields, spin structures have automorphisms so
there is a groupoid of fields rather than a space of fields.

        \begin{definition}[]\label{thm:2}
 The \emph{topological data on an oriented superstring
worldsheet}~$(\Sigma,\mathfrak{o}) $ is a discrete field~$\alpha $ which on
each connected orientable open set~$U\subset \Sigma $ is a pair of spin
structures which induce opposite orientations of~$U$.
        \end{definition}

\noindent
 In more detail, this is the indicated data on each connected orientable open
set, isomorphisms of the spin structures on intersections of such open sets,
and a coherence condition among the isomorphisms on triple intersections.
The global orientation~$\mathfrak{o}$ is used to construct from~$\alpha $ a
global spin structure~$\alpha \mstrut_\ell$ which induces~$\mathfrak{o}$ and
a spin structure~$\alpha _r$ which induces the opposite
orientation~$-\mathfrak{o}$.  The global spin structures~$\alpha
\mstrut_\ell,\alpha _r$ need not be opposites (as defined in the previous
paragraph).  For orientifold models~(\S\ref{sec:3}) the worldsheet does not
have a global orientation, indeed may be nonorientable, but it retains the
discrete field~$\alpha $; see Definition~\ref{thm:16}.  In string theory one
integrates over~$\alpha $, i.e., sums over the spin structures.

        \begin{remark}[]\label{thm:47}
 We could, of course, replace~$\alpha $ in Definition~\ref{thm:2} with the
pair of spin structures~$\alpha _\ell ,\alpha _r$.  Our formulation
emphasizes both the \emph{local} nature of the spin structure and that this
local field is the same on worldsheets in orientifold superstring theories.
        \end{remark}

 \subsection{Superstring $B$-field amplitudes}\label{subsec:2.2}

Let $X$~be a 10-manifold---a superstring spacetime---and $\cb $~a $B$-field
on~$X$ as defined in~\eqref{eq:9}.  We define the oriented superstring
$B$-field amplitude~\eqref{eq:12}, which only depends on the equivalence
class~$\beq\in \cR\inv (X)$.  To do so we replace~\eqref{eq:13} with a
pushforward in differential $R$-theory.  The main point is that the
cohomology theory~$R$ is Spin-oriented, that is, there is a pushforward in
topological $R$-theory on spin manifolds.  It is the Postnikov truncation of
the pushforward in $ko$-theory defined from the spin structure (which by the
Atiyah-Singer index theorem has an interpretation as an index of a Dirac
operator).  In fact, because we are in sufficiently low dimensions we can
identify it exactly with the pushforward in~$ko$, a fact which is useful in
the proof of the Theorem~\ref{thm:3} below.  Combining with integration of
differential forms we obtain a pushforward~\cite[\S4.10]{HS}
  \begin{equation}\label{eq:15}
     \int_{\Sigma ,\alpha \mstrut_\ell}\:\cR^{-1}(\Sigma )\longrightarrow
     \cR^{-3}(\pt)\cong \RZ 
  \end{equation}
in differential $R$-theory defined using the spin structure~$\alpha \mstrut_\ell$
on~$\Sigma $.  (Use ~\eqref{eq:10} to see the isomorphism $\cR^{-3}(\pt)\cong
\RZ$.)  This completes the definition of the $B$-field amplitude.  In the
remainder of this section we investigate special cases which go beyond the
$B$-field amplitude for the oriented bosonic string.
 
Let $(\Sigma, \mathfrak{o}) $~be a closed oriented surface and
$\Spinstr(\Sigma ,\mathfrak{o})$ the set of equivalence classes of spin
structures which refine the given orientation.  Note $\Spinstr(\Sigma
,\mathfrak{o})$~ is a torsor for~$H^1(\Sigma ;\zt)$.  Let
  \begin{equation}\label{eq:16}
     q\:\Spinstr(\Sigma,\mathfrak{o} )\longrightarrow \zt 
  \end{equation}
be the affine quadratic function which distinguishes even and odd spin
structures.  It dates back to Riemann and is the Kervaire invariant in
dimension two; see~\cite[\S1]{HS} for some history.  The characteristic
property of the quadratic function~$q$ is
  \begin{equation}\label{eq:17}
     q(\alpha + a_1 + a_2) - q(\alpha +a_1) -q(\alpha +a_2) + q(\alpha ) =
     a_1\cdot a_2,\qquad \alpha \in \Spinstr(\Sigma,\mathfrak{o} ),\quad a_1,a_2\in
     H^1(\Sigma ;\zt), 
  \end{equation}
where $a_1\cdot a_2\in \zt$ is the mod~2 intersection pairing.

        \begin{theorem}[]\label{thm:3}
 Let $\ce $~be the nonzero universal $B$-field in~\eqref{eq:11}.  For
any superstring worldsheet $\phi \:\Sigma \to X$, the $B$-field amplitude
is~$(-1)^{q(\alpha \mstrut_\ell)}$.
        \end{theorem}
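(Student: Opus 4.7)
The plan is to reduce the differential $\cR$-pushforward to a topological $ko$-theory computation, apply the projection formula to isolate the Arf/Kervaire content, and then evaluate a universal constant in the coefficient ring.

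Since $\ce$ is pulled back from a point, so is $\phi^{*}\ce\in\cR^{-1}(\Sigma)$; because $\Omega^{3}(\pt)=0$ its curvature vanishes, so $\phi^{*}\ce$ is flat and comes from a unique element of $R^{-2}(\Sigma;\RZ)$ via~\eqref{eq:10}. Under this identification the Hopkins--Singer differential pushforward~\eqref{eq:15} reduces to the topological pushforward $(\pi_{\Sigma})_{*}\colon R^{-2}(\Sigma;\RZ)\to R^{-4}(\pt;\RZ)\cong\RZ$ associated with the spin structure $\al$, and because $R=ko\langle 0\cdots 4\rangle$ with all relevant degrees in the truncation range this coincides with the $ko$-theoretic Dirac pushforward. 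The projection formula then gives
\[ \int_{\Sigma ,\al}\phi^{*}\ce \;=\; \ce\cdot(\pi_{\Sigma})_{*}(1)\in R^{-4}(\pt;\RZ), \]
where $(\pi_{\Sigma})_{*}(1)\in ko^{-2}(\pt)\cong\zt$ is the $KO$-theoretic index of the Dirac operator on $(\Sigma,\al)$, namely the mod~$2$ dimension of the space of harmonic half-spinors. By Atiyah's theorem on Riemann surfaces and spin structures this index equals $q(\al)\in\zt$, the Arf/Kervaire invariant of~\eqref{eq:16}, so the amplitude equals $q(\al)\cdot\bigl(\ce\cdot\eta^{2}\bigr)\in\RZ$.

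It remains to evaluate the universal product $\ce\cdot\eta^{2}\in R^{-4}(\pt;\RZ)\cong\RZ$. Since $\pi_{3}(R)=0$, the universal coefficient sequence gives $R^{-4}(\pt;\RZ)\cong\pi_{4}(R)\otimes\RZ\cong\RZ$, and $2$-torsion of $\ce$ forces $\ce\cdot\eta^{2}\in\{0,\tfrac12\}$. The nonzero value $\tfrac12$ can be established either (i) formally, by identifying $\ce\cdot\eta^{2}$ with the Toda bracket $\langle 2,\eta,\eta^{2}\rangle$ taken modulo $2\pi_{4}(ko)+\eta\cdot\pi_{3}(ko)=2\pi_{4}(ko)$ and checking that this bracket is the nonzero class in $\pi_{4}(ko)/2\pi_{4}(ko)\cong\zt$; or (ii) concretely, by a single sample computation---for instance on the torus with its odd spin structure---where the mod~$2$ Dirac index is nontrivial and its lift to $\RZ$-coefficients is manifestly nonzero. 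Exponentiating $2\pi i\cdot q(\al)/2$ then yields $(-1)^{q(\al)}$.

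The main obstacle is this last step. The reduction to $ko$-theory and the identification of the mod~$2$ Dirac index with the Arf invariant are classical once one trusts the Hopkins--Singer formalism packaging the pushforward~\eqref{eq:15}; but pinning down the universal constant $\ce\cdot\eta^{2}=\tfrac12$ is a genuine assertion about the multiplicative structure of $ko$ with $\RZ$-coefficients, and requires either a Toda-bracket calculation or an explicit geometric witness.
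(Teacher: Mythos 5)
Your argument follows essentially the same route as the paper's proof: push-pull reduces the amplitude to $\ce\cdot p_*(1)$, Atiyah's theorem identifies $p_*(1)=q(\alpha_\ell)\eta^2$, and what remains is the evaluation of the universal constant $\ce\cdot\eta^2\in ko^{-4}(\pt;\RZ)$. You correctly flag that constant as the crux; the paper disposes of it by citing \cite[Proposition~B.4]{FMS}, which is exactly the assertion $\ce\cdot\eta^2=\tfrac12$ that your Toda-bracket or geometric-witness routes would reprove.
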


\noindent 
 This demonstrates  that the $B$-field amplitude~\eqref{eq:12} is sensitive to
the worldsheet spin structure.

        \begin{proof}
 Let $p\:\Sigma \to\pt$ and $p_*^{\al}\:ko^0(\Sigma ;\ZZ)\to
ko^{-2}(\pt;\ZZ)$ the pushforward~\eqref{eq:15} defined using the spin
structure~$\alpha \mstrut_\ell$.  .  Since~\cite[\S4.10]{HS} pushforward is
compatible with the exact sequence~\eqref{eq:10}, we use push-pull to compute
the integral in~\eqref{eq:12} as
  \begin{equation}\label{eq:18}
      p^{\al}_*p^*\ce =\ce
     p^{\al}_*(1).
  \end{equation}
The main theorem in~\cite{A1} states that $p^{\al}_*(1)=q(\alpha
\mstrut_\ell)\eta ^2$, where $\eta ^2\in ko^{-2}(\pt;\ZZ)\cong \zt$ is the
generator.  Finally, $\ce\cdot \eta ^2\in ko^{-4}(\pt;\RZ)\cong \RZ$ is the
nonzero element~$1/2$ of order two~ \cite[Proposition~B.4]{FMS}.
        \end{proof}

The space of fields~$\sF$ in the worldsheet formulation has many components,
distinguished by the equivalence class of the spin structures~$\alpha $, the
homotopy class of~$\phi \:\Sigma \to X$, etc.  If $\cb$~is any $B$-field
on~$X$, then Theorem~\ref{thm:3} implies that the theory with $B$-field
$\cb+\ce $ differs only by the sign~$(-1)^{q(\alpha \mstrut_\ell)}$ on
components of~$\sF$ with spin structure~$\alpha \mstrut_\ell$.  Note that
$t(\cb+\ce) = t(\cb)+1$.  Recall the notation in~\eqref{eq:19}.

        \begin{definition}[]\label{thm:4}
 An oriented superstring has \emph{Type~IIB} on components of~$X$ on which
~$t(\cb)\:\pi _0X\to\zt$ vanishes and has \emph{Type~IIA} on components
of~$X$ on which $t(\cb)$~is nonzero.
        \end{definition}

        \begin{remark}[]\label{thm:39}
 In the Hamiltonian formulation the distinction between Type~IIA and Type~IIB
is a sign in the GSO projection.  In the Lagrangian formulation this sign is
manifested by the sign~$(-1)^{q(\alpha _\ell )}$ in the sum over spin
structures~\cite{SW}.  Also, since the set of isomorphism classes of
$B$-fields is an abelian group there is a distinguished element, namely
zero. In this sense our approach favors Type~IIB as more ``fundamental'' than
Type~IIA.
        \end{remark}

Next, we consider the worldsheet amplitude for the special flat $B$-fields
defined in~\eqref{eq:52}.

        \begin{theorem}[]\label{thm:30}
 Let $L\to X$ be a real line bundle and $\ce L$ the corresponding
$B$-field.  For a superstring worldsheet $\phi \:\Sigma \to X$, the $B$-field
amplitude is~$(-1)^{q(\alpha \mstrut_\ell + \phi ^*L)}$.
        \end{theorem}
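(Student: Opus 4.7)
The strategy is to mirror the proof of Theorem~\ref{thm:3}, with the new ingredient being a line-bundle twist. First I would apply push-pull exactly as there: since $\ce \in R^{-2}(\pt;\RZ)$ is pulled back from a point, and the pushforward~\eqref{eq:15} is a module map over $ko^\bullet(\pt;\RZ)$ compatible with the exact sequence~\eqref{eq:10}, I obtain
\begin{equation*}
  p^{\al}_*\bigl(\phi^*(\ce [L])\bigr) \;=\; \ce\cdot p^{\al}_*\bigl(\phi^*[L]\bigr)\;\in\; ko^{-4}(\pt;\RZ)\cong \RZ,
\end{equation*}
so the problem reduces to identifying $p^{\al}_*\bigl(\phi^*[L]\bigr)\in ko^{-2}(\pt)\cong \zt$.

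The key step is to prove the twisting identity
\begin{equation*}
  p^{\al}_*\bigl(\phi^*[L]\bigr)\;=\; p^{\al + \phi^*L}_*(1),
\end{equation*}
in which $\al$ on the right has been shifted by $w_1(\phi^*L)\in H^1(\Sigma;\zt)$ via the torsor structure on $\Spinstr(\Sigma,\mathfrak{o})$. This encodes the classical compatibility of the Dirac operator with line-bundle and spin-structure twists: a real line bundle is the same datum as a double cover, and tensoring a principal $\Spin_2$-bundle representing $\al$ by that double cover produces a new principal $\Spin_2$-bundle whose underlying frame bundle is unchanged and whose associated Clifford module is the original spinor bundle tensored with $\phi^*L$. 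Granting this identity, Atiyah's theorem \cite{A1} applied to the spin structure $\al + \phi^*L$ yields
\begin{equation*}
  p^{\al + \phi^*L}_*(1)\;=\;q(\al + \phi^*L)\,\eta^2,
\end{equation*}
and then, exactly as in Theorem~\ref{thm:3}, multiplying by $\ce$ produces the element $q(\al + \phi^*L)\cdot(1/2)\in\RZ$, whose exponential is $(-1)^{q(\al + \phi^*L)}$, as required.

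The main obstacle will be justifying the twisting identity at the level of the spectrum-level $ko$-pushforward, not merely for the analytic index of a Dirac operator. I would handle it by reducing, via naturality in $\phi$ and $ko^0(\pt)$-linearity, to a universal example---for instance the tautological line bundle over low skeleta of $\RP^\infty$, in the spirit of the proof of Lemma~\ref{thm:29}---or by working directly with the Pontryagin--Thom description of $p^{\al}_*$ and observing that twisting the $ko$-Thom class of the stable normal bundle by the pullback of $L$ has the same effect as twisting the spin structure by $w_1(L)$, since in both constructions one is tensoring the relevant spinor bundle by the same real line.
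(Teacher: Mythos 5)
Your proposal follows exactly the paper's argument: push-pull reduces the amplitude to $\ce\cdot p_*^{\al}(\phi^*[L])$, the key step is the identity $p_*^{\al}(\phi^*[L]) = p_*^{\al+\phi^*L}(1)$, and then Atiyah's theorem together with $\ce\cdot\eta^2 = 1/2$ finishes it. The paper states the twisting identity as a bare observation, whereas you flag it as the main thing needing justification and sketch two reasonable ways to establish it, but the route is the same.
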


        \begin{proof}
 We proceed as in the proof of Theorem~\ref{thm:3}.  The right hand side
of~\eqref{eq:18} is now~$\ce p_*[\phi ^*L]$.  Conclude by observing that the
pushforward of~$[\phi ^*L]$ in the spin structure~$\alpha \mstrut_\ell$ is
equal to the pushforward of~1 in the spin structure $\alpha \mstrut_\ell+\phi
^*L$.
        \end{proof}

Lemma~\ref{thm:29} implies that $t\bigl(\ce [L] \bigr)=1$ and
$a\bigl(\ce[L] \bigr)= w_1(L)$.  We can consider instead the
$B$-field~$\ce (L-1)$ for which~$t=0$ and $a$~is as before; then
combine Theorem~\ref{thm:3} and Theorem~\ref{thm:30} to compute the $B$-field
amplitude 
  \begin{equation}\label{eq:61}
     (-1)^{q(\alpha _\ell +\phi ^*L) - q(\alpha _\ell )} 
  \end{equation}
for the $B$-field $\ce(L-1)$.

  \section{Orbifolds and orientifolds}\label{sec:3}

In this section we take up two important variations of the basic Type~II
superstring.  First, suppose a finite group~$\Gamma $ acts on a smooth
10-manifold~$Y$.  Then there is a superstring theory---the
\emph{orbifold}---whose spacetime is constructed from the pair~$(Y,\Gamma )$
by ``gauging'' the symmetry group~$\Gamma $.  The main new feature is the
inclusion of \emph{twisted sectors}~\cite{DHVW}: in addition to strings $\phi
\:\cir\to Y$ one considers for each~$\gamma \in \Gamma $ maps $\phi \:\RR\to
Y$ such that $\phi (s+1)=\gamma \cdot \phi (s)$ for all~$s \in \RR$.  The
analog for surfaces is a bit more complicated.  Twisted sectors are labeled
by a principal $\Gamma $-bundle $P\to\Sigma $ over a superstring
worldsheet~$\Sigma $, and then a map to spacetime is a $\Gamma $-equivariant
map $\tilde\phi \:P\to Y$.  If $\tilde\phi '\:P'\to Y$ is another orbifold
worldsheet, then a morphism $\tf\to \tf'$ is an isomorphism $P\to P'$ of
principal $\Gamma $-bundles which intertwines~$\tf,\tf'$.  The space of these
fields is an infinite-dimensional groupoid.
 
Points of~$Y$ connected by elements of~$\Gamma $ represent the same points of
spacetime---$\Gamma $ is a gauge symmetry---so it is natural to take
spacetime as the quotient~$Y\gpd\Gamma $.  We keep track of isotropy
subgroups, due to non-identity elements~$\gamma \in \Gamma $ and~$y\in Y$
with~$\gamma \cdot y=y$.  Now an old construction in differential
geometry~\cite{Sa}, also dubbed~\cite{Th} `orbifold', does exactly that.
Furthermore, we can admit as spacetimes orbifolds~$X$ which are not global
quotients by finite groups, thus widening the collection of models introduced
in the previous paragraph.  Orbifolds are presented by a particular class of
\emph{groupoids}\footnote{We could write `orbifold'=`smooth Deligne-Mumford
stack', smooth understood as in `smooth manifold'.}~\cite{ALR}, a special
case being the presentation of a global quotient~$X=Y\gpd\Gamma $ by the
pair~$(Y,\Gamma )$.  We take up groupoid presentations in subsequent papers,
but here simply work directly with~$X$.  A worldsheet is then a map $\phi
\:\Sigma \to X$ of orbifolds, and the infinite-dimensional orbifold of such
maps includes twisted sectors.  The reader unfamiliar with
differential-geometric orbifolds may prefer to consider only global
quotients~$Y\gpd \Gamma $ and work equivariantly on~$Y$.

 \subsection{Equivariant cohomology and orbifold $B$-fields}\label{subsec:3.1}

There are many extensions of a given cohomology theory~$h$ to an equivariant
cohomology theory for spaces~$Y$ with the action of a compact Lie group~$G$.
The simplest is the \emph{Borel construction}.  It attaches to~$(Y,G)$ the
space~$Y_G = EG\times _GY$, where $EG$~is a contractible space with a free
$G$-action.  Then one defines the Borel equivariant $h$-cohomology as
$h_G(Y):=h(Y_G)$.  This is not a new cohomology theory, but rather the
nonequivariant theory applied to the Borel construction, a functor from
$G$-spaces to spaces.  That functor generalizes to orbifolds which are not
necessarily global quotients---the functor is \emph{geometric
realization}---and so leads to a notion of ``Borel cohomology'' theories on
orbifolds.  But usually $h$~has other extensions to an equivariant theory.
For example, the Atiyah-Segal geometric version of equivariant $K$-theory,
defined in terms of equivariant vector bundles, is more delicate: Borel
equivariant $K$-theory appears as a certain completion~\cite{AS}.  The
Atiyah-Segal theory is extended to orbifolds, in fact to ``local quotient
groupoids'', in~\cite{FHT}.
 
We recalled at the beginning of~\S\ref{sec:1} that the charges and fluxes
associated to an abelian gauge field in a quantum gauge theory lie in
generalized cohomology groups.  When we pass to theories formulated on
orbifolds we must additionally specify a flavor of equivariant cohomology to
locate the charges and fluxes.  For example, the Ramond-Ramond field in
superstring theory has charges and fluxes in $K$-theory.  In the
corresponding orbifold theory they are in Atiyah-Segal equivariant
$K$-theory.  This choice has consequences even locally, at the level of
differential forms: it is consistent with extra Ramond-Ramond fields in
twisted sectors.  We hope to elaborate in a future paper.  Here we limit
consideration to $B$-fields on orbifolds.

Let $M$~be a 26-dimensional orbifold.  We posit the following generalization
of Supposition~\ref{thm:6}. 

        \begin{supposition}[]\label{thm:5}
 For the oriented bosonic orbifold the flux of the $B$-field~$\cb$ lies in
the Borel cohomology~$H^3(X;\ZZ)$.
        \end{supposition}

\noindent
 Furthermore, there is a generalization of differential cohomology to
orbifolds~ \cite{LU, G}.  So an immediate reformulation locates the
$B$-field itself in orbifold differential cohomology (see~\eqref{eq:9}).
Supposition~\ref{thm:5} is implicit in the literature, for example
in~\cite{Sh, GSW}.  The $B$-field amplitude~\eqref{eq:12} is defined
as before; the integration is still over a smooth manifold, the
worldsheet~$\Sigma $.
 
For the superstring case we also posit Borel cohomology for the $B$-field.
Let $X$~be a 10-dimensional orbifold. 

        \begin{proposal}[]\label{thm:7}
 For the superstring orbifold the flux of the $B$-field~$\cb$ lies in
the Borel cohomology~$R^{-1}(X)$.
        \end{proposal}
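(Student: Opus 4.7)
The plan is to support Proposal~\ref{thm:7}---which is a posit, not a theorem---by the same triad of checks that underlies Proposal~\ref{thm:1} and Supposition~\ref{thm:5}, now applied orbifold-theoretically: a correct classical limit, a correct reduction to the smooth case, and a correct match with the Ramond-Ramond $K$-theory twistings.

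First I would verify that $R^{-1}(X)$ is well defined for an orbifold~$X$ via geometric realization of any presenting groupoid together with the fact that $R$~is an $E^{\infty}$-ring spectrum, and check that when $X$ has trivial isotropy this recovers Proposal~\ref{thm:1}. Next I would take the classical limit: since the nonzero rational homotopy of~$R$ lies only in degree~$4$, tensoring~$R^{-1}(X)$ with~$\RR$ computes the degree-$3$ Borel real cohomology of~$X$, which on a global quotient $X=Y\gpd\Gamma$ is the de Rham cohomology of $\Gamma$-invariant $3$-forms on~$Y$, matching the classical field strength. Finally, the Postnikov reduction $R^{-1}\to H^3(-;\ZZ)$ from~\eqref{eq:4} passes to the Borel construction and sends a superstring orbifold $B$-field to its bosonic-orbifold counterpart (Supposition~\ref{thm:5}); the invariants~$t(\cb)$ and $a(\cb)$ of~\eqref{eq:19} likewise extend to Borel $H^0(X;\zt)\times H^1(X;\zt)$.

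The delicate point is to argue that Borel equivariant~$R$, rather than an Atiyah-Segal-style geometric extension, is the correct home for~$\cb$. Here the plan is to test against the worldsheet amplitude of~\S\ref{sec:4}: in twisted sectors $(\Sigma ,P,\tf)$ one integrates $\tf^{*}\cb$ along a smooth spin worldsheet, and the pushforward in differential $R$-theory only sees the homotopy-type information on~$X$ that is visible along this map. Matching with known orbifold phenomena---discrete torsion and flat $B$-fields built from characters of isotropy groups---and with the twistings of equivariant $K$-theory constructed in~\cite{DFM3} would then supply the external evidence.

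The main obstacle is precisely that this is a justification rather than a deductive proof: a more refined Atiyah-Segal-type equivariant $R$-cohomology cannot be excluded on purely internal grounds, since it completes to Borel and its extra fixed-point data might in principle carry physical meaning. The resolution is deferred to the sequel---the geometric model of~$\cR^{-1}(X)$ in~\cite{DFM2} and the $K$-theoretic applications in~\cite{DFM3}---which together should exhibit Borel $R$-data as sufficient to reproduce all known orbifold $B$-field amplitudes and the correct Ramond-Ramond twistings.
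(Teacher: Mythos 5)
You have correctly identified that Proposal~\ref{thm:7} is a posit rather than a theorem, and that the paper offers no deductive proof: the paper's own justification is (i)~analogy with the bosonic Supposition~\ref{thm:5}, which is ``implicit in the literature,'' and (ii)~deferral of the differential-geometric model of~$\cR\inv(X)$ to~\cite{DFM2}, at which point the worldsheet pullback and amplitude are defined exactly as in the non-orbifold case. Your plan matches this structure in spirit but elaborates more explicitly: you spell out the consistency checks (rationalization to degree-$3$ de Rham, reduction to Proposal~\ref{thm:1} on trivial isotropy, Postnikov projection onto Supposition~\ref{thm:5}, persistence of the invariants $t(\cb)$ and $a(\cb)$) and you address Borel-versus-Atiyah--Segal head on. That last point is the genuine value you add: the paper raises the Borel/Atiyah--Segal distinction in~\S\ref{subsec:3.1} only in the context of Ramond--Ramond $K$-theory, but never explicitly argues why the $B$-field should sit in the coarser Borel theory. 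Your observation---that the worldsheet pushforward of $\phi^*\cb$ over a smooth spin surface only detects homotopy-type data of~$X$---is a reasonable internal argument for why Borel data suffices for amplitudes, though as you acknowledge it cannot rule out that a finer Atiyah--Segal-type $R$-theory would carry additional physical information; the paper simply sidesteps this by noting that no equivariant generalized differential cohomology is available and postponing the question to the sequels.

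One small notational slip: in the orbifold (non-orientifold) setting of~\S\ref{subsec:3.1} and~\S\ref{subsec:4.3}, the worldsheet map is $\phi\:\Sigma\to X$ between orbifolds and the amplitude is $\int_\Sigma\phi^*\cb$; the equivariant lift $\tf\:P\to Y$ is how this unpacks on a presenting groupoid, but the pullback one integrates is $\phi^*\cb$ on~$\Sigma$, not $\tf^*\cb$ on~$P$.
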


\noindent
 We are not aware of any general equivariant version of generalized
differential cohomology, much less a version for orbifolds.  In~\cite{DFM2}
we develop a geometric model of~$\cR\inv (X)$ for a local quotient
groupoid~$X$ and locate the $B$-field there.  The pullback to a worldsheet
then lives in the differential $R$-theory as in the non-orbifold case, and the
amplitude~\eqref{eq:12} is defined as before.

 \subsection{Orientifolds and $B$-fields}\label{subsec:3.2}

The \emph{orientifold} construction applies to both the bosonic string and
the superstring.  In its simplest incarnation the construction involves a
pair~$(Y,\sigma )$ of a smooth manifold~$Y$ and an involution $\sigma \:Y\to
Y$.  Fields on~$Y$ have a definite transformation law under~$\sigma $.  For
example, the metric is invariant whereas the 3-form field strength~$H$ of the
$B$-field is anti-invariant: $\sigma ^*H=-H$.  We combine the orbifold and
this simple orientifold by starting with a triple~$(Y,\Gamma ,\upsilon )$
consisting of a finite group~$\Gamma $, a smooth $\Gamma $-manifold~$Y$, and
a surjective homomorphism $\upsilon \:\Gamma \to\zt$.  Then fields on~$Y$
transform under~$\Gamma $: e.g., the 3-form field strength of the $B$-field
satisfies
  \begin{equation}\label{eq:20}
     \gamma ^*H = (-1)^{\upsilon (\gamma )}H,\qquad \gamma \in \Gamma .  
  \end{equation}
As before $\Gamma $~acts as a gauge symmetry and the physical points of
spacetime lie in the quotient.  Therefore, we arrive at a more general model
in a geometric formulation.  

        \begin{definition}[]\label{thm:8}
 The spacetime of an orientifold string model is an orbifold~$X$ equipped
with a double cover of orbifolds $\pi \:\Xw\to X$.
        \end{definition}

\noindent
 The equivalence class $w\in H^1(X;\zt)$ of the double cover lies in the
Borel cohomology of~$X$.  For the triple~$(Y,\Gamma ,\upsilon )$ the double
cover is $\pi \:Y\gpd \ker\upsilon \to Y\gpd\Gamma $ with characteristic class
in~$H^1_\Gamma (Y;\zt)$.

Definition~\ref{thm:8} applies to both the bosonic string and the
superstring.  There is a particular special case of the orientifold
construction which goes back to the early superstring theory literature.

        \begin{definition}[]\label{thm:11}
 The \emph{Type~I superstring} on a smooth 10-manifold~$Y$ is the orientifold
with spacetime $X=Y\times \pt\gpd(\zt)$ the orbifold quotient of the trivial
involution on~$Y$.  
        \end{definition}

We next generalize Supposition~\ref{thm:5} and Proposal~\ref{thm:7} to
bosonic and superstring orientifolds.  First, recall that if $M$~is any space
and $A\to M$ a fiber bundle of discrete abelian groups, then we can define
twisted ordinary cohomology~$H^{\bullet }(M;A)$ with coefficients in~$A$.  In
particular, if $M_w\to M$ is a double cover, then we form the associated
bundle $A_w\to M$ of free abelian groups of rank one, defined by the action
of~$\{\pm1\}$ on~$\ZZ$.  We denote the associated twisted cohomology by
$H^{w+\bullet }(M;\ZZ)$.  It has a concrete manifestation in terms of cochain
complexes: the deck transformation of the double cover $M_w\to M$ acts on the
cochain complex~$C^{\bullet }(M_w;\ZZ)$, and $H^{w+\bullet }(M;\ZZ)$ is the
cohomology of the anti-invariant subcomplex.  If $M$~is a smooth manifold
there is a corresponding twisted version~$\cH^{w+\bullet }(M)$ of
differential cohomology.  We use the model of differential cohomology as a
cochain complex of triples~$(c,h,\omega )$, where $c\in C^{\bullet
}(M_w;\ZZ)$, $\omega \in \Omega ^{\bullet }(M_w)$, and $h\in C^{\bullet
+1}(M_w;\RR)$ (see~\cite[\S6.3]{DF}, \cite[\S2.3]{HS}), and take the
anti-invariant subcomplex.

        \begin{supposition}[]\label{thm:12}
 Let $X_w\to X$ be a double cover of 26-dimensional orbifolds and suppose
$X$~is the spacetime of a bosonic orientifold.  Then the flux of the
$B$-field~$\cb$ lies in the twisted Borel cohomology $H^{w+3}(X;\ZZ)$.
        \end{supposition}

\noindent
 This appears in the literature using a different model of twisted degree
three cohomology~\cite{GSW}.  The equivalence class of the $B$-field lies in
the twisted differential cohomology group~$\cH^{w+3}(X)$, consistent with the
transformation law~\eqref{eq:20}.

The $B$-field quantization law for the \emph{superstring} orientifold is
expressed in terms of twisted $R$-cohomology.  The following discussion
applies to \emph{any} cohomology theory~$h$.  Let $M_w\to M$ be a double
cover of a space~$M$ with deck transformation~$\sigma $, and as
after~\eqref{eq:2} let $\{h_p\}_{p\in \ZZ}$ denote a spectrum representing
$h$-cohomology. Recall that $h^p(M)$~is the abelian group of homotopy classes
of maps $M\to h_p$.  Let $i_p\:h_p\to h_p$ be a map which represents the
additive inverse on cohomology classes, and we may assume $i_p\circ
i_p=\id_{h_p}$.  Define a $w$-twisted $h$-cocycle of degree~$p$ on~$M$ to be a
pair~$(c,\eta )$ of a map $c\:M_w\to h_p$ and a homotopy $\eta $ from $\sigma
^*c$ to $i_pc$.  A homotopy of $w$-twisted $h$-cocycles is a $w$-twisted
$h$-cocycle on~$\Delta ^1\times M$, where $\Delta ^1$~is the 1-simplex.  Then
$h^{w+p}(M)$~is defined as the group of homotopy classes of $w$-twisted
$h$-cocycles of degree~$p$.  A small elaboration using triples as
in~\eqref{eq:8} defines $w$-twisted $\chh$-cohomology if $M$~is a smooth
manifold.  In~\cite{DFM2} we develop a differential-geometric model
for~$\cR^{w-1}(M)$.

        \begin{proposal}[]\label{thm:13}
 Let $X_w\to X$ be a double cover of 10-dimensional orbifolds and suppose
$X$~is the spacetime of a superstring orientifold.  Then the flux of the
B-field~$\cb$ lies in the twisted Borel cohomology~$R^{w-1}(X)$. 
        \end{proposal}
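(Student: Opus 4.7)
The plan is to justify Proposal~\ref{thm:13} as the minimal extension of Proposal~\ref{thm:1} that is consistent with the transformation law~\eqref{eq:20} of the $B$-field under the orientifold involution. Since this is a physical postulate rather than a theorem, the ``proof'' takes the form of a package of consistency checks parallel to the ones performed for Proposal~\ref{thm:1}, together with pointers to the external motivations that single out this particular group.

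First I would verify the de Rham consistency. The only nonzero rational homotopy group of $R$ sits in degree $-4$, so after tensoring with~$\RR$ and twisting by~$w$ the group $R^{w-1}(X)$ reduces to the twisted real cohomology $H^{w+3}(X;\RR)$. This is exactly the space in which the classical 3-form field strength~$H$, obeying $\sigma^*H = -H$ on the double cover~$X_w$, naturally lives, so the proposed flux group has the correct rationalization. This is the direct analog of the check made after~\eqref{eq:3}.

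Second I would verify compatibility with the cases already in hand. Collapsing $X_w \to X$ to the trivial double cover kills the twist and gives $R^{w-1}(X) \cong R^{-1}(X)$, recovering Proposal~\ref{thm:1}. Twisting the Postnikov-tower sequence~\eqref{eq:4} by~$w$, and using that $\zt$-coefficients cannot be nontrivially twisted by a sign representation, yields
\begin{equation*}
0 \longrightarrow H^{w+3}(X;\ZZ) \longrightarrow R^{w-1}(X) \longrightarrow H^0(X;\zt)\times H^1(X;\zt) \longrightarrow 0,
\end{equation*}
which exhibits the bosonic orientifold flux group of Supposition~\ref{thm:12} as a canonical subgroup of the proposed superstring flux group, with the same torsion $(t,a)$ refinements as in the non-orientifold case~\eqref{eq:19}. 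Thus the superstring orientifold proposal refines, rather than contradicts, both of its predecessors.

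The main obstacle is that a genuine mathematical \emph{derivation}, as opposed to a collection of consistency checks, would require an axiomatic framework in which the choice of spectrum and of twisting character are forced. I expect the hard step to be formalizing the two external inputs that actually single out~$R^{w-1}$: the use of $\cR^{w-1}(X)$ in~\cite{DFM3} to twist equivariant $K$-theory for the Ramond-Ramond field, and the requirement in~\cite{DFM2} that the worldsheet $B$-field amplitude and the worldsheet fermion pfaffian be anomalous in precisely matching ways. The rationalization, triviality, and Postnikov checks above rule out grossly incorrect alternatives, but it is these two K-theoretic and anomaly-cancellation constraints that I would need to turn into a uniqueness statement before Proposal~\ref{thm:13} could be promoted from a proposal to a theorem.
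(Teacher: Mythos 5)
Proposal~\ref{thm:13} is a physical postulate, and the paper accordingly offers no proof---only the surrounding motivation (the analogy with Supposition~\ref{thm:12}, the de Rham check, and the pointers to the geometric model and anomaly cancellation in~\cite{DFM2,DFM3})---so your consistency-check strategy matches the paper's implicit justification. One small imprecision worth flagging: you say the only nonzero rational homotopy group of~$R$ sits in degree~$-4$, but $\pi_0 R\cong\ZZ$ is also rationally nonzero; the correct statement, as recorded in~\eqref{eq:3}, is that among the homotopy groups of the space~$R_{-1}$ (the ones actually visible in~$R^{-1}(X)$) only~$\pi_3\cong\ZZ$ survives rationalization, which still delivers your conclusion~$R^{w-1}(X)\otimes\RR\cong H^{w+3}(X;\RR)$.
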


        \begin{remark}[]\label{thm:40}
 There is an important restriction on the $B$-field flux which we will derive
in~\S\ref{sec:6}.  Namely, a superstring orientifold spacetime~ $X$ carries a
suitably twisted spin structure defined in terms of the $B$-field, and its
existence leads to the constraints~\eqref{eq:45}, \eqref{eq:47}.
        \end{remark}

 \subsection{Universal $B$-fields on orientifolds}\label{subsec:3.3}

Let $B\zt=\pt\gpd(\zt)$ and $\pi _0\:\pt\to B\zt$ the universal double cover,
which we denote~$w_0$.  The geometric realization of~$B\zt$
is~$\RP^{\infty}$, so the Borel $R$-cohomology of~$B\zt$ is the
$R$-cohomology of~$\RP^{\infty}$.  For orientifolds there are universal
$B$-fields pulled back from the classifying map $X\to B\zt$ of the
orientifold double cover $X_w\to X$.  For the bosonic orientifold we first
apply the exact sequence analogous to~\eqref{eq:10},
  \begin{equation}\label{eq:22}
     0\longrightarrow H^{w+2}(M;\RZ)\longrightarrow
     \cH^{w+3}(M)\longrightarrow \Omega _\ZZ^{w+3}(M)\longrightarrow 0 ,
  \end{equation}
to $M=B\zt$ and deduce $\cH^{w_0+3}(B\zt)\cong H^{w_0+2}(B\zt;\RZ)$.  Now the
twisted chain complex of the geometric realization~$\RP^{\infty}$ is
  \begin{equation}\label{eq:24}
 \xymatrix{
      \ZZ&\ZZ\ar[l]_{2}&\ZZ\ar[l]_{0}
      &\ZZ\ar[l]_{2}&\ZZ\ar[l]_{0}&\ar[l]_{}\cdots}
  \end{equation}
Apply $\Hom(-,\RZ)$ to compute 
  \begin{equation}\label{eq:25}
     \cH^{w_0+3}(B\zt;\ZZ)\cong H^{w_0+2}(B\zt;\RZ)\cong \zt. 
  \end{equation}
This is the universal group of $B$-fields on bosonic orientifolds.

        \begin{remark}[]\label{thm:15}
 The Bockstein map $H^{w_0+2}(B\zt;\RZ)\to H^{w_0+3}(B\zt;\ZZ)$ is an isomorphism,
as follows easily from the long exact sequence associated to
$\ZZ\to\RR\to\RZ$.  This is also obvious from the geometric picture of
differential cohomology given around~\eqref{eq:26} since in this case
$\cH^{w_0+3}(B\zt)$~ is finite, hence equal to its group of components~$H^{w_0+3}
(B\zt;\ZZ)$. 
        \end{remark}

For superstring orientifolds we also have a finite group of universal twistings.

        \begin{theorem}[]\label{thm:14}
 The group $\cR^{w_0-1}(B\zt)\cong R^{w_0-2}(B\zt;\RZ)\cong R^{w_0-1}(B\zt;\ZZ)$ is
cyclic of order~8.  For any generator~$\cth$ we can identify~$4\cth$ with the
nonzero element in~\eqref{eq:25}.  Furthermore, the pullback of~$\cth$ under
$\pi _0\:\pt\to B\zt$ is~$\ce$.
        \end{theorem}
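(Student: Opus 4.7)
The plan is to reduce to a Borel-cohomology computation via the twisted analogues of the exact sequences~(10) and (23), then to compute $R^{w_0-1}(B\zt;\ZZ)$ by a twisted Atiyah--Hirzebruch spectral sequence, and finally to use a Thom-isomorphism comparison with untwisted $KO$-theory to upgrade the order bound to cyclicity. Since $B\zt$ has vanishing rational cohomology in positive degrees (even with the local system $\ZZ_{w_0}$), the differential-form contributions in the twisted (10)/(23) vanish, and the Bockstein for $\ZZ\to\RR\to\RZ$ delivers the string of isomorphisms $\cR^{w_0-1}(B\zt)\cong R^{w_0-2}(B\zt;\RZ)\cong R^{w_0-1}(B\zt;\ZZ)$. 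I then run the twisted AHSS $E_2^{p,q}=H^{w_0+p}(B\zt;\pi_{-q}R)\Rightarrow R^{w_0+p+q}(B\zt;\ZZ)$. In total degree~$-1$ exactly three $E_2$-positions are nonzero, each a copy of~$\zt$: at $(0,-1)$ and $(1,-2)$ with mod-$2$ coefficients (the twist is trivial on $\zt$-coefficients) and at $(3,-4)$ with $\pi_4R=\ZZ$ coefficients, the last being~(25). All differentials vanish on inspection: $Sq^2$ kills classes of cohomological degree~$\leq 1$; position $(2,-3)$ is zero since $\pi_3R=0$; and the Postnikov truncation $R=ko\langle 0\cdots 4\rangle$ leaves no room for higher~$d_r$'s. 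So the group has order at most~$8$, with a three-step filtration whose associated graded is~$(\zt)^3$.

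The main obstacle is ruling out non-cyclic extensions and showing the group is genuinely $\zmod 8$. For this I compare with untwisted $KO$-theory via a Thom isomorphism. On $\RP^\infty$ the virtual bundle $V=L-1$ has rank~$0$ and $w_1(V)=w_0$, $w_2(V)=0$, so its $R$-orientation obstruction is captured entirely by~$w_0$. Together with the classical identification $(\RP^n)^L\simeq\RP^{n+1}/\RP^0$ this gives
\[ R^{w_0-1}(B\zt)\;\cong\;\widetilde R{}^{-1}\bigl((\RP^\infty)^{L-1}\bigr)\;\cong\;\widetilde R{}^{0}(\RP^\infty). \]
The same truncation argument applied to the right side shows $\widetilde R{}^0(\RP^\infty)\cong\widetilde R{}^0(\RP^4)$, and the long exact sequence associated with $ko\langle\geq 5\rangle\to ko\to R$ shows the fiber $ko\langle\geq 5\rangle$ contributes trivially to $\widetilde h{}^0(\RP^4)$ and $\widetilde h{}^1(\RP^4)$ (its AHSS requires $p\geq 5$, while $\RP^4$ has no higher cells). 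Hence $\widetilde R{}^0(\RP^4)\cong\widetilde{ko}{}^0(\RP^4)\cong\widetilde{KO}{}^0(\RP^4)\cong\zmod 8$, the last being the classical Adams computation with generator $[L]-1$, and cyclicity is established.

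The two concluding statements fall out of the AHSS filtration. The twisted analogue of~(4) embeds $H^{w_0+3}(B\zt;\ZZ)=\zt$ into the bottom filtration piece $F^3\subset R^{w_0-1}(B\zt;\ZZ)$, which is the unique order-two subgroup of~$\zmod 8$, i.e., the subgroup generated by~$4\cth$; Remark~\ref{thm:15} identifies this image with the nonzero element of~(25), giving the desired identification $4\cth=(25)$. For the pullback, $\pi_0^*\:R^{w_0-1}(B\zt;\ZZ)\to R^{-1}(\pt;\ZZ)=\zt$ kills every filtration level $F^p$ with $p\geq 1$ (since $H^p(\pt)=0$ for $p>0$) and induces the natural isomorphism $H^{w_0}(B\zt;\zt)\xrightarrow{\sim}H^0(\pt;\zt)$ on the top quotient $F^0/F^1$. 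A generator~$\cth$ of $\zmod 8$ therefore projects to a generator of $F^0/F^1\cong\zt$, so $\pi_0^*\cth$ is the nonzero element of $R^{-1}(\pt;\ZZ)\cong\cR^{-1}(\pt)$, which is $\ce$ by~(11).
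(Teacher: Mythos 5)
Your proposal takes a genuinely different route from the paper, so let me compare the two and flag where yours has gaps.

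The paper proves cyclicity by passing to Atiyah--Segal equivariant $ko$-theory on $S^4$, establishing the Clifford--algebra isomorphism $ko^{w_0-1}_{\zt}(\pt)\cong ko^0(\pt)$ via the Morita equivalence $A\cong \End(\RR^{1|1})$, and multiplying by the resulting class $\xi\inv$ to reduce to the Adams computation of $\tko^0(\RP^4)$. Your argument replaces the equivariant ABS step by a Thom-spectrum argument: identify the twist $w_0$ with the twist of the rank-zero virtual bundle $L-1\to\RP^{\infty}$, use the classical stunted-projective-space identification $(\RP^n)^L\simeq\RP^{n+1}/\RP^0$, and land in $\tilde R^0(\RP^{\infty})\cong\tko^0(\RP^4)\cong\zmod8$. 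Conceptually this is the same underlying computation---$\xi\inv$ \emph{is} the inverse Euler class, i.e., a twisted Thom class for the sign representation---but you keep the homotopy-theoretic language throughout while the paper works concretely with $\zt$-graded Clifford modules. The Thom route is arguably slicker and more self-contained if one grants the twisted Thom isomorphism; the ABS route has the virtue of being entirely explicit (in particular it \emph{constructs} the class whose pullback is computed in the final sentence of the theorem), which the paper exploits in the pullback claim. Your filtration argument for $4\cth$ and for $\pi_0^*\cth$ is clean and, once the group is known to be $\zmod8$, essentially equivalent to the paper's use of the twisted sequence~\eqref{eq:4}.

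There are two places where your argument needs shoring up. First, your ``all differentials vanish on inspection'' step in the AHSS is incomplete: the three cited facts do not address the potential $d_3\:E_3^{0,-2}\to E_3^{3,-4}$, whose source $H^{w_0}(B\zt;\pi_2 R)\cong\zt$ and target $H^{w_0+3}(B\zt;\pi_4 R)\cong\zt$ are both nonzero, and whose vanishing is not a $Sq^2$- or truncation-type observation. The $E_2$ page alone does bound the order by $8$, but the assertion that the associated graded is $(\zt)^3$ requires this $d_3$ to die, and you do not prove that except a posteriori via the Thom comparison. So the AHSS paragraph should be demoted to giving only the order bound, with cyclicity resting entirely on the Thom argument. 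Second, the identification of the twist of $L-1$ with the twist denoted~$w_0$ in the paper deserves care: you need to know that in the parametrized-spectrum model of~\cite{ABGHR, MS} the twist of $R$-theory attached to a real virtual bundle $V$ of rank~$0$ with $w_2(V)=0$ coincides, up to the degree conventions of~$R^{w-q}$, with the double-cover twist classified by $w_1(V)$. This is true and is essentially what the paper's Clifford computation verifies, but as written your sentence ``its $R$-orientation obstruction is captured entirely by $w_0$'' conflates the obstruction to orientability with the twist itself and leaves the grading bookkeeping implicit. Once that identification is pinned down, your argument is a valid alternative proof.
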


\noindent 
 Recall that $\ce$~is the nonzero class in~\eqref{eq:11}.  In~\cite{DFM3} we
interpret~$R^{w_0-1}(B\zt;\ZZ)$ as the group of universal twistings of
$KO$-theory modulo Bott periodicity, which may be identified with the super
Brouwer group~\cite[p.~195]{Wa}, \cite[Proposition~3.6]{De}.  

        \begin{proof}
 All cohomology groups in this proof have $\ZZ$~coefficients.  We first show
  \begin{equation}\label{eq:53}
     R^{w_0-1}(B\zt):= R^{w-1}(\RP^{\infty})\cong R^{w-1}(\RP^4)\cong 
     ko^{w-1}(\RP^4),
  \end{equation}
where `$w$'~denotes the nontrivial double cover of projective space.  The
first equality is the definition of (twisted) Borel cohomology.  The second
group is computed as the space of sections of a twisted bundle of spectra
over~$\RP^{\infty}$ whose fiber is~$R_{-1}$; see~\cite{ABGHR, MS}.
The second isomorphism follows from elementary obstruction theory since
$R_{-1}$~has vanishing homotopy groups above degree~3; see~\eqref{eq:3}.
Finally, the $(-1)$-space of the $ko$-spectrum and~$R_{-1}$ have the same
5-skeleton, which justifies the final isomorphism in~\eqref{eq:53}.
 
Write $ko^{w-1}(\RP^4)\cong ko_{\zt}^{w_0-1}(S^4)$.  Here we use the
Atiyah-Segal equivariant $ko$-theory for the antipodal action on the sphere;
the equivariant double cover~$w_0$ is pulled back from a point.  Next, we
claim
  \begin{equation}\label{eq:54}
     ko^{w_0-1}_{\zt}(\pt)\cong ko^0(\pt) .
  \end{equation}
For in the Atiyah-Bott-Shapiro (ABS) model with Clifford algebras~\cite{ABS},
the left hand side is the $K$-group of a category of $\zt$-graded real
modules for the $\zt$-graded algebra~$A$ generated by odd elements~$\gamma
,\alpha $ with $\gamma ^2=-1$, $\alpha ^2=1$, and $\alpha \gamma =-\gamma
\alpha $.  (That the generator~$\alpha $ of~$\zt$ is odd reflects the
twisting~$w_0$; the Clifford generator~$\gamma $ is always odd.)  But $A$~is
isomorphic to the $\zt$-graded matrix algebra~$\End(\RR^{1|1})$, and so the
category of $A$-modules is Morita equivalent to the category of $\zt$-graded
real vector spaces.  Let\footnote{We reserve the notation~`$\xi $' for the
inverse class in twisted \emph{periodic} $KO$-theory.  It is the $KO$-Euler
class of the real line with involution~$-1$, viewed as an equivariant line
bundle over a point.  It has many beautiful properties, some of which we
exploit in~\cite{DFM3}.} $\xi \inv $~denote the element
in~$ko^{w_0-1}_{\zt}(\pt)$ which corresponds to~$1\in ko^0(\pt)$ under the
isomorphism~\eqref{eq:54}.  In the ABS~model $\xi \inv $~ is represented by
  \begin{equation}\label{eq:58}
     \xi \inv :\quad \RR^{1|1}\textnormal{ with }
\gamma =\left(\begin{matrix} 0&-1\\1&0 \end{matrix}\right),\quad 
     \alpha =\left(\begin{matrix} 0&1\\1&0 \end{matrix}\right) .
  \end{equation}
Then multiplication by~$\xi \inv $ induces an isomorphism
$\tko^{0}_{\zt}(S^4)\cong ko^{w_0-1}_{\zt}(S^4)$, where the tilde denotes
reduced $ko$-theory.  Now~$\tko^{0}_{\zt}(S^4)\cong \tko^0(\RP^4)$ and
$\tko^0(\RP^4)$~is cyclic of order~8 generated by~$H-1$, where $H\to\RP^4$ is
the nontrivial (Hopf) real line bundle: the order of~$\tko^0(\RP^4)$ is
bounded by~8 by the Atiyah-Hirzebruch spectral sequence, and because
$w_4\bigl(4(H-1) \bigr)\not= 0$ we conclude $4(H-1)\not= 0$.
 
The assertion about~$4\cth$ follows from the twisted version of the exact
sequence~\eqref{eq:4} on~$B\zt$: the kernel group~$H^{w_0+3}(B\zt;\ZZ)$
is~\eqref{eq:25}.  To prove the last statement we observe that the argument
in the previous paragraph identifies the generator of~$ko^{w_0-1}_{\zt}(S^4)$
as the pullback of~$\xi \inv $ under the $\zt$-equivariant map
$h\:S^4\to\pt$.  Let $i\:\pt\hookrightarrow S^4$ be the (nonequivariant)
inclusion of a point.  Then $\pi _0^*(\cth)$ is the image of~$h^*\xi \inv $
under the composition $ko^{w_0-1}_{\zt}(S^4)\to
ko^{-1}(S^4)\xrightarrow{i^*}ko^{-1}(\pt)$, which is evidently the image
of~$\xi \inv $ under $ko_{\zt}^{w_0-1}(\pt)\to ko^{-1}(\pt)$.  (We choose
orientations of~$\pt$ and~$S^4$ to trivialize the pullback of~$w_0$
under~$\pi _0$.)  Finally, in the ABS model this pullback simply drops the
action of~$\alpha $, and what remains of~\eqref{eq:58} is the generator~$\eta
$ of~$ko^{-1}(\pt)\cong \zt$.
        \end{proof}

  \section{The $B$-field amplitude for orientifolds}\label{sec:4}

A worldsheet in an orientifold string theory has several
fields~\cite[Definition~5]{DFM1}.  For the bosonic case they all appear in
Definition~\ref{thm:9}; for the superstring there are additional fields
articulated in Definition~\ref{thm:16} and Definition~\ref{thm:22}.

 \subsection{Bosonic orientifold worldsheets}\label{subsec:4.4}

As a preliminary recall that a smooth $n$-manifold~$M$ has a canonical
orientation double cover $\hp\:\hM\to M$ defined as the quotient~$\hM
:=\sB(M)/GL^+_n\RR$, where $GL^+_n\RR$~is the group of orientation-preserving
automorphisms of~$\RR^n$.  The manifold~$\hM$ is canonically oriented.  It is
natural to denote the double cover~$\hp\:\hM\to M$ as~`$w_1(M)$'.

        \begin{definition}[]\label{thm:9}
 Let $\pi \:\Xw\to X$ be the spacetime of an orientifold string theory.
An \emph{orientifold worldsheet} is a triple~$(\Sigma ,\phi ,\tf)$ consisting
of a compact 2-manifold~$\Sigma $, a smooth map $\phi \:\Sigma \to X$, and an
equivariant lift $\tf\:\hS\to \Xw$ of~$\phi $ to the orientation double cover
of~$\Sigma $. 
        \end{definition}

\noindent
 In theories with open strings $\Sigma $~may have nonempty boundary.  The
surface~$\Sigma $ is not oriented and need not be orientable.  In fact, the
existence of the equivariant lift implies a constraint involving its first
Stiefel-Whitney class:
  \begin{equation}\label{eq:21}
     \phi ^*w=w_1(\Sigma );
  \end{equation}
the equivariant lift~$\tf$ is an isomorphism of the double covers
in~\eqref{eq:21}.\footnote{In our ambiguous notation `$w$'~and
`$w_1(M)$'~denote both a double cover and its equivalence class.}  For an
orientifold spacetime defined by a triple~$(Y,\Gamma ,\upsilon )$ as above,
Definition~\ref{thm:9} unpacks to a principal $\Gamma $-bundle $P\to \Sigma
$, an orientation on~$P$, and a $\Gamma $-equivariant map $P\to Y$.  There is
a constraint: if $\upsilon (\gamma )=0$, then the action of~$\gamma $ on~$P$
preserves the orientation; if $\upsilon (\gamma )=1$, then $\gamma $~reverses
the orientation.  There is an obvious notion of equivalence of
triples~$(\Sigma ,\phi ,\tf)$, and the collection of such triples forms a
groupoid presentation of an infinite dimensional orbifold.

        \begin{remark}[]\label{thm:10}
  Definition~\ref{thm:9} applied to a single string clarifies the nature of
twisted sectors in orientifold theories.  Namely, if $\phi \:\cir\to X$ is a
string, then the constraint implies that $\phi ^*w=0$, since the circle is
orientable.  Thus $\phi $~lifts to the double cover~$\Xw$.  Put differently,
the homotopy class of~$\phi $ does not detect a nontrivial double cover, so
does not sense the orientifold.  Now the ``twisting'' in a twisted sector for
a global quotient orbifold~$X=Y\gpd \Gamma $ measures the extent to which a
string $\cir\to X$ fails to lift to a string $\cir\to Y$.  So if
$X=Y\gpd\Gamma $ is a global quotient with $\upsilon \:\Gamma \to\zt$
specifying the orientifold, then $\phi $~lifts to~$\Xw=Y\gpd\ker\upsilon $ and
the twisted sectors are labeled by conjugacy classes in~$\ker\upsilon $.  In
case $\Xw=Y$~is a smooth manifold and $X$~the orbifold quotient by an
involution, then any string $\phi \:\cir\to X$ lifts to a loop $\cir\to Y$.
Hence there are no twisted sectors in a ``pure'' orientifold.
        \end{remark}

 \subsection{$B$-field amplitudes for bosonic orientifolds}\label{subsec:4.1}

Recall that if $M$~is a smooth compact $n$-manifold then integration of
differential forms
  \begin{equation}\label{eq:27}
     \int_{M,\mathfrak{o}}\:\Omega ^n(M)\longrightarrow \RR 
  \end{equation}
is only defined after choosing an orientation~$\mathfrak{o}$.  Absent an
orientation one may integrate densities, which in our current notation are
$w_1$-twisted differential forms: forms on the orientation double
cover~$\hM$ which are odd under the deck transformation.  Integration of
densities is a homomorphism
  \begin{equation}\label{eq:28}
     \int_{M}\:\Omega ^{w_1(M)+n}(M)\longrightarrow \RR 
  \end{equation}
which lifts to integration in twisted differential cohomology: 
  \begin{equation}\label{eq:29}
     \int_{M}\:\cH ^{w_1(M)+n+1}(M)\longrightarrow \cH^{1}(\pt)\cong \RZ . 
  \end{equation}
To define~\eqref{eq:29} one may follow~\cite[\S3.4]{HS} working in the model
with smooth singular cochains.
 
That understood, the definition of the $B$-field amplitude~\eqref{eq:12} for
bosonic orientifolds is straightforward.  Let $\cb$~be a bosonic orientifold
$B$-field as in Supposition~\ref{thm:12}; its equivalence class is $\beq\in
\cH^{w+3}(X)$.  Then for an orientifold worldsheet as in
Definition~\ref{thm:9} the isomorphism~\eqref{eq:21} (defined by~$\tf$ in
Definition~\ref{thm:9}) places the pullback~$\phi ^*\beq$ in the
group~$\cH^{w_1(\Sigma )+3}(\Sigma )$.  The $B$-field amplitude is then
computed using a twisted integration~\eqref{eq:29} in place of~\eqref{eq:13}.
This bosonic orientifold $B$-field amplitude is described using a particular
model for~$\cH^{w+3}(X)$ in~\cite{GSW}.
 
The universal $B$-field amplitude is easy to compute.

        \begin{proposition}[]\label{thm:18}
 Let $\cb$~be the nonzero universal $B$-field in~\eqref{eq:25}.  Then for any
bosonic orientifold worldsheet the $B$-field amplitude~\eqref{eq:12} is
$(-1)^{\Euler(\Sigma )}$, where $\Euler(\Sigma )$~is the Euler number of the
closed surface~$\Sigma $.
        \end{proposition}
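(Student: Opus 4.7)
The plan is (i) to reduce by naturality to the universal $B$-field on $B\zt$, (ii) to use flatness to obtain unoriented-cobordism invariance of the amplitude, (iii) to observe that $\Omega_2^{O}\cong \zt$ is generated by $\RP^2$, and (iv) to compute the amplitude directly on $\RP^2$.

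First, the amplitude depends only on $\phi^*\cb\in \cH^{w_1(\Sigma)+3}(\Sigma)$, and the constraint $\phi^*w=w_1(\Sigma)$ from Definition~\ref{thm:9} forces the composite $\Sigma\xrightarrow{\phi}X\to B\zt$ to be (up to homotopy) the classifying map of $w_1(\Sigma)$. So it suffices to treat the universal case $X=B\zt$, $\phi=w_1$, and $\cb$ the generator of $\cH^{w_0+3}(B\zt)\cong \zt$ from~\eqref{eq:25}.

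Second, $\cb$ is 2-torsion and, as a class in the finite group $\cH^{w_0+3}(B\zt)$, has vanishing curvature. Both sides of the asserted identity are therefore $\{\pm1\}$-valued unoriented-bordism invariants of $\Sigma$: if $\Sigma=\partial W$ then $\phi=w_1(\Sigma)$ extends to $w_1(W)$ on $W$, and Stokes for the twisted differential-cohomology pushforward makes $\int_{\Sigma}\phi^*\cb$ vanish, while $\Euler(\Sigma)\bmod 2$ is a Stiefel-Whitney number. Since $\Omega_2^{O}\cong \zt$ is generated by $\RP^2$, with $\Euler(\RP^2)=1$, everything reduces to showing the amplitude on $\RP^2$ equals $-1$.

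Third, on $\RP^2$ the exact sequence~\eqref{eq:22} degenerates---no 3-forms on a surface---to $\cH^{w_1+3}(\RP^2)\cong H^{w_1+2}(\RP^2;\RZ)$, and the integration $\int_{\RP^2}$ is the identification of this group with $\RZ$ via twisted Poincar\'e duality $H^{w_1+2}(\RP^2;\RZ)\cong H_0(\RP^2;\RZ)$. The standard cell structure makes the twisted $\RZ$-cochain complex of $\RP^\infty$ into $\RZ\xrightarrow{2}\RZ\xrightarrow{0}\RZ\xrightarrow{2}\cdots$, with degree-$2$ cohomology $\ker(2)\cong \zt$. Restriction along $\RP^2\hookrightarrow \RP^\infty$ is the identity on 2-cochains, so the generator maps to the 2-torsion subgroup $\{0,\tfrac12\}\subset \RZ=H^{w_1+2}(\RP^2;\RZ)$; hence $\int_{\RP^2}\phi^*\cb=\tfrac12\pmod \ZZ$, and the amplitude is $-1$.

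The main obstacle is the third step: translating the universal homotopy-theoretic data into the explicit cellular identification that the order-two generator restricts to $\tfrac12\in \RZ$ on $\RP^2$, and checking that on flat classes the twisted differential-cohomology integration reduces to the twisted Poincar\'e pairing. Once that single numerical value is in hand, the naturality, flatness, and bordism arguments of steps one and two propagate it mechanically to the general formula $(-1)^{\Euler(\Sigma)}$, consistent with Wu's formula $\Euler(\Sigma)\equiv w_1(\Sigma)^2[\Sigma]\pmod 2$ implicit in the bordism reduction.
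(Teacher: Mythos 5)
Your proof is correct, but it takes a genuinely different route from the paper's. The paper's argument is direct and cohomological: one identifies $\phi^*\beq\in H^{w_1(\Sigma)+2}(\Sigma;\RZ)$ as the pullback of the generator $x^2$ under the classifying map $w_1\:\Sigma\to\RP^{\infty}$ (so $\phi^*\beq=w_1(\Sigma)^2$), then applies Wu's formula $w_1^2=w_2$ on surfaces, and finally notes that $w_2(\Sigma)$ is the mod $2$ reduction of the Euler class. You instead exploit the flatness of the universal class to invoke unoriented bordism invariance, reduce via $\Omega_2^O\cong\zt$ to the generator $\RP^2$, and compute there by hand from the twisted cellular cochain complex~\eqref{eq:24}. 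Both arguments are sound. The paper's is shorter because Wu's formula does the work in one stroke; your bordism route trades that citation for an explicit computation on $\RP^2$ and a twisted Stokes/Poincar\'e-duality argument, effectively reproving the Wu-type identity $w_1^2[\Sigma]=w_2[\Sigma]$ as a byproduct of checking agreement on the generator (you acknowledge this yourself at the end). One small care point in your step two: bordism invariance requires that the classifying map extend over a bounding $W$, which it does because $w_1(W)\res{\Sigma}=w_1(\Sigma)$ via the trivial normal bundle of a boundary; stating this explicitly would tighten the argument.
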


       \begin{proof}
 If $\phi \:\Sigma \to X$ is the worldsheet map, then we can identify $\phi
^*[\cb]\in H^{w_1(\Sigma )+2}(\Sigma ;\RZ)$ as the pullback of~$x^2\in
H^2(\RP^{\infty};\zt)$ via the map $w_1\:\Sigma \to\RP^{\infty}$ which
classifies~$w_1(\Sigma )$.  The latter pulls back the generator $x\in
H^1(\RP^{\infty};\zt)$ to~$w_1(\Sigma )$, so $\phi ^*\beq=w_1(\Sigma )^2$.
Now $w_1(\Sigma )^2=w_2(\Sigma )$ since the difference of the two sides is
the second Wu class, which vanishes on manifolds of dimension less than four.
Finally, $w_2(\Sigma )$~is the mod~2 reduction of the Euler class (which in
general lives in twisted integral cohomology).
        \end{proof}

 \subsection{Spin structures on superstring orientifold
worldsheets}\label{subsec:4.2} 

Turning to the worldsheet in a superstring orientifold theory we begin by
specifying the appropriate notion of spin structure.  We could not find this
definition in the string theory literature, even for the Type~I superstring.

        \begin{definition}[]\label{thm:16}
  The \emph{topological data on a superstring orientifold worldsheet}~$\Sigma
$ is a discrete field~$\alpha $ which on each connected orientable open
set~$U\subset \Sigma $ is a pair of spin structures which induce opposite
orientations of~$U$.
        \end{definition}

\noindent
 Definition~\ref{thm:16} is identical to Definition~\ref{thm:2} except for
the omission of the global orientation.  Although $\alpha $~is locally a pair
of spin structures, there is no global spin structure on~$\Sigma $.  Rather,
the local pair of spin structures with opposite underlying orientation glue
to a global spin structure on the orientation double cover~$\hS$.  The global
description is equivalent to the local Definition~\ref{thm:16}, and we
use~`$\alpha $' to denote the spin structure on~$\hS$ as well as the local
field in Definition~\ref{thm:16}.  Let $\hs$~denote the involution on~$\hS$.
If the spin structures are locally opposite consistent with gluing---more
simply, if the pullback~$\hs^*\alpha$ of the global spin structure on~$\hS$
is the opposite~$-\alpha $---then a refinement to a \pinm\ structure
on~$\Sigma $ may be possible, but is additional data.

        \begin{remark}[]\label{thm:28}
 The oriented double cover~$S^2$ of~$\RP^2$ has a unique spin structure (up
to~$\cong $) compatible with the orientation.  It refines to two inequivalent
\pinm\ structures on~$\RP^2$.  On the other hand, the oriented double
cover~$\cir\times \cir$ of the Klein bottle~$K$ has 4~inequivalent spin
structures compatible with the orientation.  Two of them each refine in two
inequivalent ways to give four inequivalent \pinm\ structures on the Klein
bottle; the other two each refine in two inequivalent ways to give four
inequivalent \pinp\ structures on the Klein bottle.
        \end{remark}

        \begin{remark}[]\label{thm:24}
 It is important to emphasize that for general~$\alpha $ there is no
refinement to a \pinm\ structure.  (Indeed, if $\alpha$ refines to a \pinm\
structure then the pullback to the orientation double cover defines an
\emph{equivariant} spin structure.)  This has important ramifications for the
physics. Consider a connected open set $U\subset \Sigma$ with the topology of
a cylinder. On $U$ there are four choices of a pair of spin structures: each
spin structure can be either bounding or non-bounding when restricted to the
circle. In the case where one spin structure bounds, and the other does not,
it is impossible to refine~$\alpha$ to a \pinm\ structure since the pullback
of the pair to the oriented double cover of~$U$ is not invariant under the
deck transformation.  From the physical viewpoint, it is clear from the
Hamiltonian formulation of the string theory that this mixed choice of spin
structures occurs for Feynman diagrams in which spacetime fermions propagate
along an internal line corresponding to $U$.  Conversely, restricting
attention to only those~ $\alpha$ which do refine to a \pinm\ structure
misses all of the sectors of the worldsheet theory in which space-time
fermions propagate along that channel.
        \end{remark}

        \begin{remark}[]\label{thm:17}
 Consider an orientifold theory in which the orientifold double cover $\pi
\:\Xw\to X$ is trivial \emph{and trivialized}.  Then Definition~\ref{thm:16}
reduces to Definition~\ref{thm:2}.  For the trivialization may be modeled as
a section of~$\pi $.  Then for an orientifold worldsheet
(Definition~\ref{thm:9}) $\phi \:\Sigma \to X$ the equivariant lift~$\tf$
identifies $\phi ^*(\Xw\to X)\cong (\hS\to \Sigma )$, and so the section
of~$\pi $ pulls back to a section of $\hp\:\hS\to\Sigma $.  But the latter is
precisely a global orientation~$\mathfrak{o}$ of~$\Sigma $.
        \end{remark}

 \subsection{$B$-field amplitudes for superstring orbifolds}\label{subsec:4.3}

To describe the $B$-field amplitude~\eqref{eq:12} for the superstring we need
the analog of~\eqref{eq:29} in differential $R$-theory.  A complete
definition involves twistings of cohomology theories beyond twists by double
covers (see the discussion preceding Proposal~\ref{thm:13}) and is deferred
to~\cite{DFM2}.  For now recall that $R$~is Spin-oriented and there is a
pushforward \eqref{eq:15} on spin manifolds.  More generally, the obstruction
to a spin structure on an $n$-manifold~$M$ determines a twisting~$\tR M$ of
$R$-theory, so too of differential $R$-theory, and a twisted pushforward
  \begin{equation}\label{eq:32}
     \int_{M}\:\cR^{\,\tR M-3}(M)\longrightarrow \cR^{-3}(\pt)\cong \RZ. 
  \end{equation}
The twisting~$\tR M$ includes the dimension of~$M$, as well as the
Stiefel-Whitney classes~$w_1(M),w_2(M)$.  A spin structure produces an
isomorphism $n\to \tR M$ and so reduces the pushforward~\eqref{eq:32} to a
pushforward on untwisted differential $R$-theory, as in~\eqref{eq:15}.
 
Now suppose $\pi\:\Xw\to X$ is the orientifold double cover of a
10-dimensional superstring spacetime~$X$ with $B$-field~$\cb$.  Given a
worldsheet as in Definitions~\ref{thm:9} and~\ref{thm:16} the pullback $\phi
^*\beq$ of the equivalence class of the $B$-field lies in $\cR^{w_1(\Sigma
)-1}(\Sigma )$.  It seems, then, that to push forward to a point
using~\eqref{eq:32} we need an isomorphism $w_1(\Sigma )\to \tR \Sigma -2$ of
twistings of $R$-theory.  However, the local spin structures~$\alpha $
on~$\Sigma $---equivalently global spin structure on~$\hS$---do not give such
an isomorphism.  This puzzle stymied the authors for a long period.  The
resolution is that the $B$-field amplitude in general is not a number, but
rather an element in a complex line:
  \begin{equation}\label{eq:59}
     \textnormal{\emph{The $B$-field amplitude for a superstring orientifold is
     anomalous.}}\footnote{We refer to a term in an (effective) action as
     \emph{anomalous} if it takes values in a (noncanoncially trivialized)
     complex line rather than the complex numbers.}
  \end{equation}
There is one case in which there is an isomorphism $w_1(\Sigma )\to \tR
\Sigma -2$, namely when $\alpha $~is refined to a \pinm\ structure on~$\Sigma
$.  Then the $B$-field amplitude may be defined as a number.  Notice that on
a \pinm\ worldsheet the two local spin structures~$\alpha $ are opposites.
The anomaly measures the extent to which that fails for general~$\alpha $.

        \begin{remark}[]\label{thm:20}
 To illustrate, suppose that the superstring orientifold worldsheet $\Sigma
$~is diffeomorphic to a 2-dimensional torus.  Even though $\Sigma $~is
orientable, the fields do not include an orientation.  The field~$\alpha $ is
equivalent to a pair of spin structures~$\alpha ',\alpha ''$ on~$\Sigma $
with opposite underlying orientations.  Up to isomorphism there are 4~choices
for each of~$\alpha ',\alpha ''$, so 16~possibilities in total.  Of those
4~refine uniquely to \pinm\ structures on~$\Sigma $.  The $B$-field
amplitudes for the remaining~12 are anomalous.
        \end{remark}

Recall from Theorem~\ref{thm:3} that in the oriented case the universal
$B$-field amplitude for the superstring computes the well-known $\zt$-valued
quadratic form on spin structures.  We now investigate the analogous
amplitude in the orientifold case for \pinm worldsheets.  Let $\Sigma $~be a
closed 2-manifold and $\sPm(\Sigma )$ the $H^1(\Sigma ;\zt)$-torsor of
equivalence classes of \pinm\ structures.  Let $\cth $~be a generator of the
cyclic group~$R^{w_0-2}(B\zt;\RZ)$; see Theorem~\ref{thm:14}.  Now the
orientation double cover determines a map $h\:\Sigma \to B\zt$ and so a class
$h^*\cth \in R^{w_1(\Sigma )-2}(\Sigma ;\RZ)$.  Let $p\:\Sigma \to\pt$.  Then
a \pinm\ structure~$\alpha ^-$ on~$\Sigma $ determines a pushforward map
  \begin{equation}\label{eq:55}
     p_*^{\am}\:R^{w_1(\Sigma )-2}(\Sigma ;\RZ)\longrightarrow
     R^{-4}(\pt;\RZ)\cong \RZ. 
  \end{equation}
Define
  \begin{equation}\label{eq:33} 
  \begin{aligned}
     q^-\:\sPm(\Sigma )&\longrightarrow \RZ \\ 
     \am &\longmapsto p_*^{\am}(h^*\cth) 
  \end{aligned}
  \end{equation}
We can replace the $R$-cohomology groups in~\eqref{eq:55} with $ko$-groups or
even periodic $KO$-groups.

        \begin{theorem}[]\label{thm:37}
 The function~$q^-$ takes values in~$\frac 18\ZZ/\ZZ\cong \zmod8$, is a
quadratic refinement of the intersection pairing, and its reduction modulo
two is congruent to the Euler number~$\Euler(\Sigma )$. 
        \end{theorem}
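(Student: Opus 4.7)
The plan is to verify the three assertions successively, using Theorem~\ref{thm:14}, the \pinm\ analogue of Theorem~\ref{thm:30}, and multiplicative structure in $R$-theory of~$\Sigma$.

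\emph{Valuation.} Since $\cth\in R^{w_0-2}(B\zt;\RZ)$ has order exactly~$8$ by Theorem~\ref{thm:14}, its pullback $h^*\cth$ has order dividing~$8$, and therefore $q^-(\am) = p_*^{\am}(h^*\cth)\in\RZ$ lies in $\tfrac18\ZZ/\ZZ\cong\zmod 8$.

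\emph{Quadratic refinement.} I would exploit the fact that the $H^1(\Sigma;\zt)$-torsor action on $\sPm(\Sigma)$ is realised by twisting the Dirac pushforward by the associated real line bundle. Explicitly, for $a\in H^1(\Sigma;\zt)$ with corresponding line bundle $L_a\to\Sigma$, the \pinm-analogue of Theorem~\ref{thm:30} gives
\[
q^-(\am + a) \;=\; p_*^{\am}\bigl([L_a]\cdot h^*\cth\bigr).
\]
Since $L_a\otimes L_b \cong L_{a+b}$ in the Picard group of real line bundles, the mixed second difference in $a,b\in H^1(\Sigma;\zt)$ becomes
\[
q^-(\am+a+b)-q^-(\am+a)-q^-(\am+b)+q^-(\am)\;=\;p_*^{\am}\bigl(([L_a]-1)([L_b]-1)\cdot h^*\cth\bigr).
\]
The Atiyah-Hirzebruch filtration on $R^0(\Sigma)$ identifies $([L_a]-1)([L_b]-1)$ with $(a\cup b)\cdot\eta^2\in F^2 R^0(\Sigma)$ modulo higher filtration, which vanishes on a $2$-manifold. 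Combined with an ABS/Clifford-module evaluation of $\eta^2\cdot h^*\cth$ pushed forward via~$\am$, in the spirit of the argument for Theorem~\ref{thm:3} and the proof of Theorem~\ref{thm:14}, the second difference computes to $(a\cdot b)/2\in\tfrac18\ZZ/\ZZ$, i.e.\ the mod~$2$ intersection pairing embedded via $\zt\hookrightarrow\zmod 8$ by $1\mapsto 4$.

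\emph{Mod~$2$ reduction and Euler number.} Multiplication by~$4$ maps $\tfrac18\ZZ/\ZZ$ onto $\tfrac12\ZZ/\ZZ\cong\zt$, and by multiplicativity $4q^-(\am) = p_*^{\am}\bigl(h^*(4\cth)\bigr)$. By Theorem~\ref{thm:14}, $4\cth$ is the nonzero element of~\eqref{eq:25}, i.e., the universal bosonic orientifold $B$-field on~$B\zt$, lying in the image of $H^{w_0+3}(B\zt;\ZZ)\to R^{w_0-1}(B\zt;\ZZ)$. On this Postnikov-bottom subgroup the $R$-pushforward agrees with the ordinary twisted cohomology pushforward, so the computation of $4q^-(\am)$ reduces to that of Proposition~\ref{thm:18}, yielding the bosonic $B$-field amplitude $(-1)^{\Euler(\Sigma)}$; equivalently $4q^-(\am)\equiv\Euler(\Sigma)\pmod 2$.

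The main obstacle is the middle step, particularly the identification $([L_a]-1)([L_b]-1)=(a\cup b)\eta^2$ in $R^0(\Sigma)$ and the Clifford-algebra evaluation of $p_*^{\am}(\eta^2\cdot h^*\cth)$ as $\tfrac12\in\RZ$. These require careful assembly of the multiplicative Atiyah-Hirzebruch spectral sequence for $R$-theory with the \pinm\ $KO$-orientation of~\eqref{eq:55} and the ABS model used in the proof of Theorem~\ref{thm:14}.
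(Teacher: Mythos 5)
Your roadmap is in fact close to the paper's, but the part you flag as the ``main obstacle'' is precisely the part of the argument you have not supplied, and it is the real crux. Let me say where you match and where the gap sits.

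The valuation claim and the mod-2-reduction step match the paper essentially verbatim. For the quadratic refinement, the paper cites the argument of Atiyah~\cite[p.~53]{A1} up through his Lemma~(2.3) rather than redoing it; what you are doing in unpacking $q^-(\am+a) = p_*^{\am}([L_a]\cdot h^*\cth)$, taking the mixed second difference $p_*^{\am}\bigl(([L_a]-1)([L_b]-1)\cdot h^*\cth\bigr)$, and identifying $([L_a]-1)([L_b]-1)$ with the filtration-$2$ class carrying $a\cup b$ is a correct and reasonable explicit rendering of that citation. So up to that point the two proofs are parallel in substance, differing only in whether one re-derives or cites the reduction.

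The gap is your final evaluation: that $p_*^{\am}$ applied to the filtration-$2$ class times $h^*\cth$ gives $\tfrac12\in\RZ$ when $a\cdot b=1$. You gesture at ``an ABS/Clifford-module evaluation \ldots in the spirit of Theorem~\ref{thm:3} and Theorem~\ref{thm:14},'' but that does not yet constitute an argument, and the reason it is delicate is exactly the point you are eliding: on a general (possibly nonorientable) \pinm\ surface the pushforward is twisted and $h^*\cth$ is not a pullback from a point, so you cannot invoke the product $\ce\cdot\eta^2$ directly. The paper's proof of its key identity~\eqref{eq:57} handles this by a localization trick you do not have: the class $u=i_*(\eta^2)$ is supported in a contractible neighborhood of a point, so by excision the pushforward may be computed on~$S^2$; there an orientation of~$S^2$ is a section of~$w_1$ that lifts $h\:S^2\to B\zt$ through $\pi_0\:\pt\to B\zt$, and Theorem~\ref{thm:14} then replaces $h^*\cth$ by~$\ce$, at which point push-pull reduces the computation to the one already done in Theorem~\ref{thm:3}. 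Without something like this excision-and-lift argument your second-difference formula~\eqref{eq:56} is asserted, not proved. I would also note that the paper's replacement of the ``$4\cth$ lies in the Postnikov bottom, so the $R$-pushforward agrees with the ordinary twisted pushforward'' step is asserted without justification in the paper too, so the extra sentence you add there is a mild improvement, though you should check that compatibility statement against~\cite[\S4.10]{HS} rather than take it for granted.
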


        \begin{proof}
 The first statement follows since~$8\cth=0$.  We must show that for
$a_1,a_2\in H^1(\Sigma ;\zt)$,
  \begin{equation}\label{eq:56}
     q^-(\am + a_1 + a_2) - q^-(\am +a_1) -q^-(\am +a_2) + q^-(\am ) = 
     \frac 12a_1\cdot a_2,\qquad \am \in \sPm(\Sigma ).
  \end{equation}
The argument of~\cite[p.~53]{A1} applies verbatim through Lemma~(2.3), which
we replace with the following assertion. Let $i\:\pt\hookrightarrow \Sigma $
and $u=i_*(\eta ^2)\in ko^{0}(\Sigma ;\ZZ)$; then 
  \begin{equation}\label{eq:57}
     p^{\am}_*(h^*\cth\cdot u)=1/2.
  \end{equation}
To prove this we note that $u$~is supported in a neighborhood of a point
in~$\Sigma $, so by excision we can compute the left side on a sphere~$S^2$.
Fix an orientation of~$S^2$, which is a section of the orientation double
cover~$w_1(\Sigma )$.  This lifts $h\:S^2\to B\zt$ to $\pi _0\:\pt\to B\zt$.  Then
since by Theorem~\ref{thm:14} we have~$\pi _0^*\cth=\ce$, we
reduce~\eqref{eq:57} to $p_*(\ce\cdot u)$, which by push-pull is~$\ce\cdot
\eta ^2$.  As in the proof of Theorem~\ref{thm:3} this is nonzero.

The last statement follows from Proposition~\ref{thm:18} since $4\cth$~is the
nonzero element of~\eqref{eq:25}; see Theorem~\ref{thm:14}.
        \end{proof}

Recall~\cite[\S3]{KT} that the \pinm\ bordism group~$\Opinm$ is cyclic of
order eight and the Kervaire invariant is an isomorphism. 

        \begin{corollary}[]\label{thm:43}
 With an appropriate choice of generator~$\cth$ in Theorem~\ref{thm:14}, the
quadratic form~\eqref{eq:33} is the Kervaire invariant.
        \end{corollary}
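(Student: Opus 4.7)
My plan is to establish that $q^-$ is a \pinm\ bordism invariant and then identify it with the Kervaire invariant via a single computation on the generator of $\Opinm$.

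First I would show bordism invariance. Suppose $(\Sigma, \am)$ bounds a \pinm\ 3-manifold $(W, \widetilde{\am})$. Naturality of $w_1$ under inclusion of a boundary extends the classifying map $h\colon \Sigma \to B\zt$ of $w_1(\Sigma)$ to a map $h_W\colon W \to B\zt$ classifying $w_1(W)$, so $h^*\cth$ extends to $h_W^*\cth \in R^{w_1(W)-2}(W;\RZ)$. The analogous twisted pushforward for the 3-manifold $W$ lands in $R^{-5}(\pt;\RZ)$, and this group vanishes: from~\eqref{eq:2} we have $\pi_q R = 0$ for $q \geq 5$, so $R^{-5}(\pt;\RR) = 0$, and the coefficient long exact sequence collapses to $R^{-5}(\pt;\RZ) = \ker\bigl(R^{-4}(\pt) \to R^{-4}(\pt;\RR)\bigr) = \ker(\ZZ \to \RR) = 0$. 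The standard Stokes/cobordism compatibility of differential pushforward~\cite[\S4.10]{HS}---the same mechanism that makes index-type secondary invariants bordism invariant---then forces $p_*^{\am}(h^*\cth) = 0$.

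Since pushforward is additive over disjoint unions, $q^-$ descends to a group homomorphism $\overline{q}^-\colon \Opinm \to \frac{1}{8}\ZZ/\ZZ$. By the last clause of Theorem~\ref{thm:37}, $\overline{q}^-([\RP^2, \am])$ has mod-$2$ reduction $\Euler(\RP^2) = 1$, so it is an odd multiple of $\frac{1}{8}$ and thus a generator of $\frac{1}{8}\ZZ/\ZZ$. Combined with the standard fact that $[\RP^2, \am]$ generates the cyclic group $\Opinm \cong \zmod{8}$ (see~\cite[\S3]{KT}), this shows $\overline{q}^-$ is an isomorphism.

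Both $\overline{q}^-$ and the Kervaire invariant are isomorphisms $\Opinm \xrightarrow{\sim} \frac{1}{8}\ZZ/\ZZ$, so they differ by multiplication by a unit $m \in (\ZZ/8)^{\times}$. By Theorem~\ref{thm:14} the element $m\cth$ is again a generator of $R^{w_0-1}(B\zt;\ZZ)$, and replacing $\cth$ with $m^{-1}\cth$ rescales $\overline{q}^-$ to agree with the Kervaire invariant on the generator, hence on all of $\Opinm$. The main obstacle is the Stokes-type bordism invariance step: translating the intuitive claim ``a class pulled back from a bounding manifold pushes forward to zero in $\RZ$-coefficients'' into a rigorous statement within twisted differential $R$-theory requires care with the pushforward machinery of~\cite[\S4.10]{HS}, but involves no genuinely new ingredient.
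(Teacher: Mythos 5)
Your proposal is correct and follows essentially the same route as the paper's proof: both establish bordism invariance, evaluate $q^-$ on $\RP^2$ using the odd Euler number from Theorem~\ref{thm:37} to see the value is a generator of $\zmod8$, and then adjust the generator $\cth$ (the paper phrases this as choosing among the four generators; you phrase it as correcting by a unit in $(\zmod8)^\times$). The only difference is degree of detail: you spell out why bordism invariance holds (the computation $R^{-5}(\pt;\RZ)=0$ plus Stokes compatibility of the twisted pushforward) and make the homomorphism property explicit, whereas the paper asserts both as evident.
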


\noindent
 For oriented surfaces the $\zt$-valued Kervaire invariant~\eqref{eq:16} has
a well-known $KO$-theoretic interpretation~\cite{A1}.  Corollary~\ref{thm:43}
provides a similar $KO$-theoretic interpretation in the unoriented case. 

        \begin{proof}
 The definition~\eqref{eq:33} of~$q^-$ is evidently a bordism invariant. The
real projective plane~$\RP^2$ has two \pinm\ structures; either
generates~$\Opinm$.  Since $\RP^2$ has odd Euler number, the value of~$q^-$
on either \pinm\ structure is a generator of~$\zmod8$.  The four possible
choices of~$\cth$ in the definition of~$q^-$ give the four generators
of~$\zmod8$, so we can choose the one which matches the standard Kervaire
invariant on~$\RP^2$, hence on all \pinm\ surfaces.
        \end{proof}

  \section{Worldsheet fermions and spacetime spin structures}\label{sec:5}

A fermionic functional integral is, by definition, the pfaffian of a Dirac
operator.  It is naturally an element of a line, so in a family of bosonic
fields a section of a line bundle over the parameter space~\cite[Part~2]{F1}.
For an orientifold superstring worldsheet the $B$-field amplitude is also
anomalous~\eqref{eq:59}.  The main result of~\cite{DFM2} is that the product
of these anomalies is trivializable, and furthermore the correct notion of
spin structure on spacetime~(\S\ref{sec:6}) leads to a trivialization.  In
this section, after identifying the fermionic fields in the 2-dimensional
worldsheet theory, we work out an analogous phenomenon in a familiar
1-dimensional theory: the ``spinning particle''.  Namely, in
Theorem~\ref{thm:27} we identify the pfaffian line of the Dirac operator on a
circle in terms of the frame bundle of spacetime, and show how a spin
structure on spacetime leads to a trivialization.

 \subsection{Fermions on orientifold superstring worldsheets}\label{subsec:5.2}

This is the last in the triad of definitions (see Definitions~\ref{thm:9}
and~\ref{thm:16}) specifying the fields on an orientifold superstring
worldsheet~\cite[Definition~5]{DFM1}.

        \begin{definition}[]\label{thm:22}
 An \emph{orientifold superstring worldsheet} consists of $(\Sigma ,\phi
,\tf,\alpha )$ as in Definitions~\ref{thm:9} and~\ref{thm:16} together with a
positive chirality spinor field~$\psi $ on~$\hS$ with coefficients
in~$\hp^*\phi ^*TX$ and a negative chirality spinor field~$\chi $ on~$\hS$
with coefficients in~$T^*\hS$. 
        \end{definition}

\noindent
 The notion of chirality is defined by the canonical orientation on the
orientation double cover~$\hS$; the spinors use the spin structure~$\alpha $.
Both~$\psi $ (the ``matter fermion'') and~$\chi $ (the ``gravitino'') should
be regarded as local fields on~$\Sigma $, but the global description on~$\hS$
is more transparent; the action is local on~$\Sigma $.  The crucial factor in
the functional integral over~$\psi ,\chi $ for fixed~$\phi $ and~$\alpha $ is
the pfaffian of a Dirac operator on~$\hS$, which may be written
  \begin{equation}\label{eq:34}
     \pfaff D_{\hS,\alpha }\bigl(\hp^*\phi ^*TX-T\Sigma  \bigr). 
  \end{equation}
The pfaffian line bundle is local, so we can \emph{heuristically} analyze it
on a small contractible open set~$U\subset \Sigma $.  Now $\pi \inv U\subset
\hS$~is the disjoint union of two oppositely oriented open sets diffeomorphic
to~$U$ with spin structures~$\alpha ',\alpha ''$ refining the underlying
orientations.  The pfaffian~\eqref{eq:34} is anomalous on each component
of~$\pi \inv U$.  If the spin structures~$\alpha ',\alpha ''$ are opposite,
then the product of the anomalies is trivializable; an isomorphism of~$\alpha
'$ with the opposite of~$\alpha ''$ trivializes the anomaly.  So we see that
the anomaly measures the failure of~$\alpha '$ and~$\alpha ''$ to be
opposites, just as for the $B$-field.\footnote{The anomaly also depends on
the topology of~$\phi ^*(TX)$.}  (See the text leading to
Remark~\ref{thm:20}.)
 
For the oriented superstring a global argument for the triviality of the
pfaffian line bundle---the anomaly in the fermionic functional
integral~\eqref{eq:34}---is given in~\cite[\S4]{FW}.  In the non-orientifold
case there is no anomaly in the $B$-field amplitude (see~\eqref{eq:15}).  The
argument in~\cite{FW} only proves the triviality; it does not provide a
trivialization so does not determine a definition of~\eqref{eq:34} as a
function.  (This additional data is sometimes termed a `setting of the
quantum integrand'.)  In fact, the superstring data \emph{does} determine a
trivialization: it is the \emph{spacetime} spin structure which is critical.
We explore this two-dimensional anomaly problem in~\cite{DFM2} and show that
the trivialization varies under a change of spacetime spin structure. 

        \begin{remark}[]\label{thm:45}
 For an oriented superstring worldsheet (Definition~\ref{thm:2}), the
dependence is as follows.  Suppose $a\in H^1(X;\zt)$ is a change of spacetime
spin structure and $b= \alpha_l - \alpha_r\in H^1(\Sigma ;\zt)$ the
difference of the two global worldsheet spin structures.  Then the
trivialization for a worldsheet $\phi \:\Sigma \to X$ multiplies
by 
  \begin{equation}\label{eq:63}
     (-1)^{\langle \phi ^*a,b \rangle} 
  \end{equation}
where $\langle -,- \rangle$~is the $\zt$-valued pairing on~$H^1(\Sigma
;\zt)$.  Combining this factor with~\eqref{eq:61} one sees that our
formulation of the oriented superstring has the expected left-right symmetry.
See~\eqref{eq:62} for a 1-dimensional analog of~\eqref{eq:63}.
Equation~\eqref{eq:63} is consistent with~\cite{AW}.
        \end{remark}

 \subsection{A supersymmetric quantum mechanical theory}\label{subsec:5.3}

Here we illustrate the impact of the spacetime spin structure on the
worldsheet pfaffian in a simpler quantum field theory: the 1-dimensional
supersymmetric quantum mechanical system whose partition function computes
the index of the Dirac operator~\cite{Ag, FW, W1}.  In this
theory spacetime~$X$ is a Riemannian manifold of arbitrary dimension~$n$.
For the classical theory it does not have a spin structure or even an
orientation.  However, to simplify we assume that $X$~is oriented.  The
worldsheet of superstring theory is replaced by a 1-dimensional manifold~$S$
with a map $\phi \:S\to X$.  The manifold~$S$ is endowed with a single spin
structure.  The fermionic fields of Definition~\ref{thm:22} are replaced by a
single spinor field~$\psi $ on~$S$ with coefficients in~$\phi ^*TX$.
 
Consider $S=\cir$ with the nonbounding spin structure~$\alpha $.  The first
step in computing the partition function is to compute the fermionic
functional integral over~$\psi $ for a fixed loop $\phi \:\cir\to X$, which
is the pfaffian
  \begin{equation}\label{eq:36}
     \pfaff D_{\cir,\alpha }(\phi ^*TX). 
  \end{equation}
As the Dirac operator on the circle is real, the square of its pfaffian line
bundle is canonically trivial and so the square of~\eqref{eq:36} is a
well-defined function.  There is a standard regularization and the result
(see~\cite{A3}, for example) is
  \begin{equation}\label{eq:37}
     \bigl(\pfaff D_{\cir,\alpha }(\phi ^*TX)\bigr)^2 = \det\bigl(1-\hol(\phi
     )\bigr) , 
  \end{equation}
where $\hol(\phi )\in SO_n$ is the holonomy, well-defined up to conjugacy.
We may as well assume that $n$~is even, or else \eqref{eq:37}~vanishes
identically.  Now the function $g\mapsto \det(1-g)$ on~$SO_n$ does not have a
smooth square root.  However, its lift to~$\Spin_n$ does have a square
root~$f$, the difference of the characters of the half-spin representations:
  \begin{equation}\label{eq:38}
     f(\tg) =  i^{n/2}\bigl(\chi \mstrut _{\Delta ^+}(\tg) - \chi \mstrut
     _{\Delta ^-}(\tg)\bigr),\qquad \tg\in \Spin_n.  
  \end{equation}
Hence given a spin structure on~$X$ we can lift the holonomy function
$\hol\:LX\to SO_n$ on the loop space of~$X$ to a function
$\thol\:LX\to\Spin_n$, and so define~\eqref{eq:36} as
  \begin{equation}\label{eq:39}
     \pfaff D_{\cir,\alpha }(\phi ^*TX) := f\bigl(\thol(\phi ) \bigr). 
  \end{equation}
The right hand side of~\eqref{eq:39} manifestly uses the spin structure on
spacetime~$X$.  Note that we can equally replace the function~$f$ by its
negative; the overall sign is not determined by this argument.

        \begin{remark}[]\label{thm:41}
 If we change the spin structure on~$X$ by a class $a\in H^1(X;\zt)$, then it
follows immediately from~\eqref{eq:39} that the pfaffian multiplies by 
  \begin{equation}\label{eq:62}
     (-1)^{\phi ^*(a)[\cir]}. 
  \end{equation}
        \end{remark}


The pfaffian is more naturally an element of a line and for the analogy with
the 2-dimensional worldsheet theory it is more illuminating to analyze the
pfaffian line ~$\Pfaff D_{\cir,\alpha }(\phi ^*TX)$ directly.
(See~\cite[\S3]{F2} for the definition of the pfaffian line of a Dirac
operator.)  Write $E\to\cir$ for the oriented vector bundle~$\phi ^*TX$.  The
Dirac operator~$D_{\cir,\alpha }$ is the covariant derivative~$\nabla$ acting
on sections of~$E\to \cir$.  It is real and skew-adjoint, so its pfaffian
line is real.  Identify a real line~$L$ with the $\zt$-torsor $\pi
_0\bigl(L\setminus \{0\} \bigr)$, so obtain the pfaffian torsor ~$\Pfaff
\nabla$.  Let $\sBSO(E)\to\cir$ denote the bundle of oriented orthonormal
frames of~$E$.  It is trivializable since $SO_n$~is connected.  The space of
sections~$\Gamma $ has two components and is naturally a torsor for~$\pi
_1(SO_n)\cong \zt$.  Furthermore, a spin structure $\sstr\to\sBSO(E)\to\cir$
trivializes the torsor~$\pi _0\Gamma $: there is a distinguished component of
sections which lift to~$\sstr$.

        \begin{theorem}[]\label{thm:27}
 There is a canonical isomorphism $\Pfaff \nabla \cong \pi _0\Gamma $.
Therefore, a spin structure on~$E$ determines a trivialization of~$\Pfaff
\nabla$.
        \end{theorem}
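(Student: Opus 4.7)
The plan is to construct a canonical map $\Phi\:\Gamma \to \Pfaff\nabla \setminus \{0\}$ and show that it induces a bijection on~$\pi_0$. Both sides are naturally $\zt$-torsors: $\Pfaff\nabla$ is a real line (so $\pi_0$ of its complement of zero is a $\zt$-torsor) because $\nabla$ is real and skew-adjoint, while $\pi_0\Gamma$ is a torsor for $\pi_0 \Map(\cir,SO_n)\cong \pi_1(SO_n)\cong \zt$ for $n\ge 3$ (the cases $n=1,2$ are elementary). Granted the bijection, the second statement is immediate: a spin structure $\sstr\to\sBSO(E)$ singles out the component of~$\Gamma$ consisting of those sections which lift, and its image is a distinguished nonzero element of $\Pfaff\nabla$.

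To construct $\Phi$, fix $s\in \Gamma$, trivialize $E\cong \cir\times\RR^n$ via~$s$, and write $\nabla = d/dt + A(t)$ for some $A\:\cir\to\son$. The affine family $\nabla_\tau := d/dt + \tau A$, $\tau\in[0,1]$, connects $d/dt$ to $\nabla$ through real skew-adjoint Fredholm operators; by~\cite[\S3]{F2} the pfaffian lines assemble into a continuous real line bundle $\mathcal{L}_s\to[0,1]$, trivializable since $[0,1]$ is contractible. At $\tau=0$ the kernel of $d/dt$ on $C^\infty(\cir;\RR^n)$ is the space of constant sections, canonically $\RR^n$, and $\Pfaff(d/dt)\cong \det\RR^n$ carries the canonical orientation given by the standard basis. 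Parallel-transport this orientation across $\mathcal{L}_s$ to $\tau=1$ to define $\Phi(s)\in \Pfaff\nabla\setminus\{0\}$.

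The heart of the proof is to compare $\Phi(s)$ and $\Phi(s\cdot g)$ for $g\:\cir\to SO_n$. In the two trivializations the single operator~$\nabla$ is the right endpoint of two affine paths; reading both in a common trivialization their concatenation is a loop of real skew-adjoint Fredholm operators, and $\Phi(s\cdot g)/\Phi(s)\in \zt$ equals $(-1)$ raised to the mod-$2$ spectral flow along this loop. By homotopy invariance this depends only on $[g]\in \pi_1(SO_n)$. For a standard generator---say a loop in an $SO_2$-subgroup---explicit Fourier analysis of $d/dt + \tau\, g^{-1}dg$ shows exactly one eigenvalue crossing zero as $\tau$ runs from $0$ to $1$; equivalently, clutching by~$g$ yields an oriented rank-$n$ bundle over the $2$-torus whose mod-$2$ family index is the generator of $KO^{-2}(\pt)\cong \zt$, and this family index is the spectral flow. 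Hence $\Phi$ swaps the two components of~$\Gamma$ across the two nonzero elements of $\Pfaff\nabla$, producing the desired canonical bijection.

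The main obstacle is this spectral-flow identification: verifying that a generator of $\pi_1(SO_n)\cong \zt$ contributes an odd mod-$2$ spectral flow. Everything else---the construction of $\mathcal{L}_s$, the canonical orientation coming from $\ker(d/dt)$, the naturality of $\Phi$ in~$s$, and the transport of orientations across contractible bases---is routine pfaffian line bundle manipulation.
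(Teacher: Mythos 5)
Your approach is genuinely different from the paper's, and it is basically sound, but it carries an avoidable technical burden and one step is not as clean as stated.

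The paper sidesteps paths of operators and spectral flow entirely.  Using parallel transport, it selects the frame $e\in\Gamma$ so that $\nabla(e)$ is a \emph{constant} skew-symmetric matrix~$A$ with eigenvalues~$a\sqrt{-1}$ normalized to~$-\pi<a\le\pi$.  The span $W\subset\sH$ of the columns of~$e$ is then a canonically defined, $\nabla$-invariant, $n$-dimensional subspace on which $\nabla$ acts by~$A$, and the Fourier-mode computation shows $\nabla$ is invertible on~$W^\perp$.  Finite-dimensional reduction of the pfaffian line gives $\Pfaff\nabla\cong\Det W^*$ directly, and since a basis of~$W$ \emph{is} a framing, the orientation torsor of~$W$ is $\pi_0\Gamma$ on the nose.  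No transport along a family and no parity-of-spectral-flow lemma is needed; the whole content is an explicit Fredholm reduction tied to the canonical frame.

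Your construction of~$\Phi$ by transporting the orientation of~$\ker(d/dt)\cong\RR^n$ along the affine family $d/dt+\tau A$ is legitimate, and the claim that a generator of $\pi_1(SO_n)$ produces odd mod-$2$ spectral flow (equivalently the nontrivial class in $KO^{-2}(\pt)$; this is the $w_2(E)/[\cir]$ computation the paper attributes to~\cite{FW}) is correct.  Two points, however, need repair.  First, the ``concatenation is a loop'' step is not quite right as written: reading the second path in the trivialization by~$s$ gives $g\bigl(d/dt+\tau A'\bigr)g\inv$, whose left endpoint is $g(d/dt)g\inv=d/dt-g'g\inv$, not $d/dt$.  So the concatenation joins $d/dt$ to $d/dt-g'g\inv$, and you still owe a canonical, orientation-controlled identification of $\Pfaff(d/dt)$ with $\Pfaff\bigl(g(d/dt)g\inv\bigr)$ before the spectral-flow interpretation applies.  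Second, well-definedness of~$\Phi$ on a given component of~$\Gamma$ (that homotopic framings give equal $\Phi(s)$) is the even half of the same spectral-flow statement and should be stated explicitly, since $\Phi$ is only a priori defined section by section.  Both are fixable, but they illustrate why the paper's canonical-frame reduction is the more economical route: it replaces the transport-plus-spectral-flow machinery with a single finite-dimensional subspace that does all the work.
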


\noindent
 Suppose $Z$~is any manifold and $E\to Z\times \cir$ an oriented bundle with
covariant derivative.  Then the Pfaffian torsors vary smoothly in~$z\in Z$ so
form a double cover of~$Z$.  Its characteristic class may be computed from
the Atiyah-Singer index theorem as the slant product~$w_2(E)/[\cir]$;
see~\cite[(5.22)]{FW}.  Theorem~\ref{thm:27} is a ``categorification'' of
this topological result---an isomorphism of line bundles rather than simply
an equality of their isomorphism classes--- necessary in order to discuss
trivializations.  We remark that more sophisticated categorifications of the
Atiyah-Singer index theorem are needed for anomaly problems in higher
dimensions, such as~\cite{DFM2}; see~\cite{Bu} for a recent result in
dimension two.

        \begin{proof}
 Fix a Riemannian metric on~$\cir$ of total length~1.  The covariant
derivative of a framing $e\in \Gamma $ is a function $\nabla
(e)\:\cir\to\son$.  Using parallel transport choose~$e$ so that $\nabla
(e)$~is a constant skew-symmetric matrix~$A$ whose eigenvalues~$a\sqrt{-1}$
satisfy $-\pi <a\le\pi $.  Note that $\exp(A)$~is the holonomy of~ $\nabla$.
The framing~$e$ is determined up to a constant element of~$SO_n$.  In
particular, the span~$W$ of the basis vectors of~$e$ in the space~$\sH$ of
sections of~$E\to \cir$ is independent of this choice.  It is easy to see
that $\nabla $~is invertible on the orthogonal complement~$W^\perp$ to~$W$
in~$\sH$ relative to the $L^2$~metric.  So $\Pfaff\nabla $~ is canonically
the determinant line~$\Det W^*$ of the finite dimensional vector space~$W^*$,
and the associated $\zt$-torsor is canonically the $\zt$-torsor~$\sT$ of
orientation classes of bases of~$W$.  But a basis of~$W$ is an element
of~$\Gamma $, so $\sT$~is canonically~$\pi _0\Gamma $, as claimed.
        \end{proof}

        \begin{remark}[]\label{thm:42}
 Formula~\eqref{eq:62} for the change of trivialization as a function of the
change of spin structure on~$E$ follows immediately: $E\to\cir$ has two spin
structures and they determine two different points of~$\pi _0\Gamma $. 
        \end{remark}

  \section{The twisted spin structure on a superstring orientifold
spacetime}\label{sec:6}  

The spacetime~$X$ of an oriented superstring theory has a spin structure.
There is a modification for orientifolds in superstring theory: the notion of
spin structure is twisted by both the orientifold double cover $\pi \:\Xw\to
X$ and the $B$-field.  In this section we describe this twisted notion of
spin structure in concrete differential-geometric terms.
 
Recall quite generally that if $\rho \:G\to G'$ is a homomorphism of Lie
groups and $P\to M$ a principal $G$-bundle over a space~$M$, then there is an
associated principal $G'$-bundle $\rho (P)\to M$, defined by the ``mixing
construction'' $\rho (P)= P\times _GG'$.  Conversely, if $Q\to M$ is a
principal $G'$-bundle, then a \emph{reduction to~$G$ along~$\rho $} is a
pair~$(P,\varphi )$ consisting of a principal $G$-bundle $P\to M$ and an
isomorphism $\varphi \:\rho (P)\to Q$.  If $M^n$~is a smooth manifold and
$\rho \:G\to GL_n\RR$, then a reduction of the $GL_n\RR$ frame
bundle~$\sB(M)$ to~$G$ along~$\rho $ is called a \emph{$G$-structure} on~$M$.
We defined orientations in these terms in~\ref{subsec:4.4} and spin
structures in these terms in~\ref{subsec:2.1}; for convenience we used a
metric and so a homomorphism~\eqref{eq:14} into the orthogonal group.  A
principal $G$-bundle is classified by a map\footnote{More precisely, a
classifying map for $P\to M$ is a $G$-equivariant map $P\to EG$ for $EG\to
BG$ a universal $G$-bundle.}  $M\to BG$ whose homotopy class is an invariant
of $P\to M$.  The topological classification of reductions along $\rho \:G\to
G'$ may be analyzed as a lifting problem:
  \begin{equation}\label{eq:40}
     \xymatrix{&BG\ar[d]^{B\rho }\\M\ar@{-->}[ur]^P\ar[r]^Q & BG'}
  \end{equation}
Two particular cases are of interest here: (i)\ $\rho $~is the inclusion of
an index two subgroup, in which case $B\rho \:BG\to BG'$ is a double cover
and the obstruction to~\eqref{eq:40} lies in~$H^1(M;\zt)$; and (ii)\ $\rho
$~is a surjective double cover, in which case $B\rho \:BG\to BG'$ is a
principal $K(\zt,1)$-bundle\footnote{$K(\zt,1)$~is an Eilenberg-MacLane
space; a topological group model is the group of projective linear
transformations of an infinite dimensional real Hilbert space.} and the
obstruction to~\eqref{eq:40} lies in~$H^2(M;\zt)$.

The spin group~\eqref{eq:14} is a double cover of an index two subgroup
of~$O_n$.  We now define groups~$G_0,G_1$ which bear the same relation
to~$\tOn:=O_n\times \zt\times \zt$ via homomorphisms 
  \begin{equation}\label{eq:41}
     \rho _i\:G_i\longrightarrow \tOn,\qquad i=1,2
  \end{equation}
which factor through an index two subgroup $G_i'\subset \tOn$.  First, let
$D_4\to\zt\times \zt$ be the dihedral double cover in which the generators of
the $\zt$~factors lift to anticommuting elements of order two.  Define
$G_0,G_0'$ as the first two groups in
  \begin{equation}\label{eq:42}
     \rho _0\:(\Spin_n\times D_4)/\{\pm1\}\longrightarrow SO_n\times
     \zt\times \zt\longrightarrow \tOn, 
  \end{equation}
where $-1\in \{\pm1\}$ is the product of the central elements of~$\Spin_n$
and~$D_4$.  For~$G_1$ we first define the surjective homomorphism 
  \begin{equation}\label{eq:43}
     \begin{aligned} \tOn&\longrightarrow \zt \\ (g,a,b)&\longmapsto
      c+a,\qquad \det(g)=(-1)^c,\end{aligned} 
  \end{equation}
and let $G_1'$~be the kernel.  Then $G_1$ is the inverse image of~$G'_1$ under
  \begin{equation}\label{eq:44}
     (\Pinm_n\times D_4)/\{\pm1\}\longrightarrow \tOn.
  \end{equation}

Suppose $X$~is a superstring spacetime---a 10-dimensional orbifold---and
$\Xw\to X$ an orientifold double cover.  Proposal~\ref{thm:13} implies that a
$B$-field~$\cb$ is a geometric object whose equivalence class~$\beq$ lies
in~$\cR^{w-1}(X)$.  As in~\eqref{eq:19} there are topological invariants
$t(\cb)\:\pi _0X\to\zt$ and a double cover~$\Xa\to X$.

        \begin{definition}[]\label{thm:25}
 Let $\Xw\to X$ be the orientifold double cover of a Riemannian orbifold~$X$
which represents a superstring spacetime.  Let $\cb$~be a $B$-field on~$X$.
Then a \emph{twisted spin structure} is a reduction of the principal
$\tOt$-bundle
  \begin{equation}\label{eq:46}
     \sB\mstrut _O(X)\times \mstrut _X\Xw\times \mstrut _X\Xa\to X 
  \end{equation}
along $\rho \:G_i\to\tOt$, where $i\in \zt$ is chosen on each component
of~$X$ according to the value of~$t(\cb)$.
        \end{definition}

\noindent
 Definition~\ref{thm:4} expresses the two types in more familiar terms as
Type~IIB for~$t(\cb)=0$ and Type~IIA for~$t(\cb)=1$.  Typically spacetime is
connected and only one of these occurs.
 
Equivalence classes of twisted spin structures, if they exist, form a torsor
for $H^0(X;\zt)\times H^1(X;\zt)$.  The existence is settled by the
following. 

        \begin{proposition}[]\label{thm:26}
 Let $\Xw\to X$ and $\cb$~be as in Definition~\ref{thm:25}.  Then a twisted
spin structure exists if and only if 
  \begin{align}\label{eq:45}
      w_1(X) &= t(\cb)w \\ \label{eq:47} w_2(X) &= a(\cb)w +
      t(\cb)w^2
  \end{align} 
        \end{proposition}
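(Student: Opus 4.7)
The plan is to factor the lift $X \to BG_i \to B\tOt$ through the intermediate group $G_i'$, producing two stages: reduction to $G_i'$ along an index-two inclusion (obstruction in $H^1(X;\zt)$) and lifting from $G_i'$ to $G_i$ along a central $\zt$-extension (obstruction in $H^2(X;\zt)$). I would identify each obstruction by pulling back a universal class and reading off the condition it imposes.

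For stage 1, the character $\tOt \to \tOt/G_i' \cong \zt$ is: the determinant on $O_{10}$ for $i=0$, and $(g,a,b) \mapsto c + a$ (with $\det g = (-1)^c$) for $i=1$. Pulling back along the classifying map of the bundle~\eqref{eq:46}, these become $w_1(X)$ and $w_1(X) + w$ respectively, and vanishing gives $w_1(X) = t(\cb)\,w$, which is exactly~\eqref{eq:45}. Note that this determines $w_1 = x$ on $BG_1'$ (with $x$ the generator classifying the first $\zt$-factor), a fact I will use in stage~2.

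For stage 2, the key observation is that $(\Spin_{10}\times D_4)/\{\pm 1\}$ is the ``Baer sum'' of the pullbacks to $SO_{10}\times\zt\times\zt$ of the spin extension of $SO_{10}$ and the dihedral extension of $\zt\times\zt$; similarly $(\Pinm_{10}\times D_4)/\{\pm 1\}$ with Pin$^-$ replacing Spin. The class of such a sum of central $\zt$-extensions equals the sum of the individual classes. The Spin and $\Pinm$ classes are $w_2$ and $w_2 + w_1^2$ respectively, while the dihedral class is $\kappa = xy$ (with $x,y$ generating $H^1$ of the two $\zt$-factors): this can be read off from the defining anticommutation $\alpha\beta = -\beta\alpha$, which is precisely the commutator pairing computing $\kappa$ as a central $\zt$-extension of an elementary abelian $2$-group. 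Pulling back to $X$: for $i=0$ the obstruction becomes $w_2(X) + w\cdot a(\cb)$; for $i=1$, using $w_1 = x$ on $BG_1'$ so that $w_1^2$ pulls back to $w^2$, it becomes $w_2(X) + w^2 + w\cdot a(\cb)$. Setting these to zero yields~\eqref{eq:47}.

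The main technical obstacle is to verify that no mixed K\"unneth terms of the form $w_1\cdot x$ or $w_1\cdot y$ appear in the universal extension class. I would check this by restricting the extension to $BO_{10}\times B\zt$: since each of the rank-two subgroups $\{\pm 1,\pm\alpha\}$ and $\{\pm 1,\pm\beta\}$ of $D_4$ is a Klein four-group (both generators square to $+1$), the restriction splits as a direct product of $\Pinm_{10}$ (or $\Spin_{10}$) with a trivial $\zt$-extension. Hence the extension class restricted to $BO_{10}\times B\zt$ carries only the $w_2 + w_1^2$ (or $w_2$) term, ruling out the mixed contributions and confirming the formula above.
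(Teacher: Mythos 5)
Your proof is correct and follows essentially the same two-stage lifting strategy as the paper: reduce to $G_i'$ (obstruction $w_1$ resp.\ $w_1+x$, giving~\eqref{eq:45}), then lift to $G_i$ (obstruction $w_2+xy$ resp.\ $w_1^2+w_2+xy$, giving~\eqref{eq:47}). The Baer-sum observation and the restriction check for mixed K\"unneth terms are a cleaner justification of the extension classes that the paper simply asserts (citing~\cite[Lemma~1.3]{KT} for the $\Pinm$ class), but the computation and conclusion are the same.
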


\noindent
 These equations live in the Borel cohomology of the orbifold~$X$. 

        \begin{proof}
 Equation~\eqref{eq:45} is the condition to reduce the structure group
of~\eqref{eq:46} along the inclusion $G'_i\hookrightarrow \tOt$.  For
$G'_0=SO_{10}\times \zt\times \zt$ it is the condition $w_1(X)=0$ for an
orientation.  For~$t(\cb)=1$ the homomorphism~\eqref{eq:43} induces a
map~$B\tOt\to B\zt$ which pulls the generator of~$H^1(B\zt;\zt)$ back
to~$w_1+x$, where $H^1(B\tOt;\zt) = H^1(BO_{10};\zt)\times
H^1(B\zt;\zt)\times H^1(B\zt;\zt)$ has generators~$w_1,x,y$.
Then~\eqref{eq:45} follows by pullback along the classifying map
of~\eqref{eq:46}.

For~\eqref{eq:47} we first observe that the double cover $D_4\to\zt\times
\zt$ is classified by $xy\in H^2(B\zt\times B\zt;\zt)$.  Then the first
homomorphism in~\eqref{eq:42} induces a principal $K(\zt,1)$-bundle $BG_0\to
BG'_0$ classified by~$w_2+xy$, from which \eqref{eq:47}~follows on components
with~$t(\cb)=0$.  For components with~$t(\cb)=1$ we first
recall~\cite[Lemma~1.3]{KT} that the universal $K(\zt,1)$-bundle
$B\Pinm_{10}\to BO_{10}$ is classified by~$w_1^2+w_2\in H^2(BO_{10};\zt)$.
Then the definition~\eqref{eq:44} of~$G_1$ shows that $BG_1\to BG'_1$ is
classified by~$w_1^2+w_2 + xy$; equation ~\eqref{eq:47} now follows from this
and~\eqref{eq:45}.
        \end{proof}

        \begin{remark}[]\label{thm:46}
 The occurrence of~$D_4$ in our definition of a twisted spin structure is
closely related to the $D_4$ symmetry group\footnote{generated by the
worldsheet transformations $(-1)^{F_L}, (-1)^{F_R}$ and worldsheet
parity~$\Omega $} which appears in Hamiltonian treatments of orientifolds in
the physics literature.  We hope to elaborate on this elsewhere.
        \end{remark}

\bigskip\bigskip
\newcommand{\etalchar}[1]{$^{#1}$}
\providecommand{\bysame}{\leavevmode\hbox to3em{\hrulefill}\thinspace}
\providecommand{\MR}{\relax\ifhmode\unskip\space\fi MR }
\providecommand{\MRhref}[2]{%
  \href{http://www.ams.org/mathscinet-getitem?mr=#1}{#2}
}
\providecommand{\href}[2]{#2}

\end{document}